\documentclass[10pt,journal,compsoc]{IEEEtran}

\usepackage{graphicx}
\usepackage{caption}
\usepackage{subcaption}
\usepackage{array}
\usepackage{multirow}
\usepackage{amsthm}
\usepackage{amsmath}
\usepackage{amsfonts}
\usepackage{amssymb}
\usepackage{graphicx}
\usepackage{multirow}
\usepackage{setspace}
\usepackage{balance}
\usepackage{graphicx}
\usepackage{caption}
\usepackage{subcaption}
\usepackage{algorithmic}
\usepackage{algorithm}
\usepackage{comment}
\usepackage{color}
\usepackage{xr}
\usepackage{pdfpages}

\externaldocument{appendix}

\usepackage[labelformat=simple]{subcaption}

%%%%%%%%%%%%%%%%%%%%%%%%%%%%%% Textclass specific LaTeX commands.
\newtheorem{thm}{Theorem}
\newtheorem{lem}{Lemma}

\newtheorem*{remark}{Remark}
\newtheorem{coro}[thm]{Corollary}

% Some very useful LaTeX packages include:
% (uncomment the ones you want to load)

% *** MISC UTILITY PACKAGES ***
%
%\usepackage{ifpdf}
% Heiko Oberdiek's ifpdf.sty is very useful if you need conditional
% compilation based on whether the output is pdf or dvi.
% usage:
% \ifpdf
%   % pdf code
% \else
%   % dvi code
% \fi
% The latest version of ifpdf.sty can be obtained from:
% http://www.ctan.org/pkg/ifpdf
% Also, note that IEEEtran.cls V1.7 and later provides a builtin
% \ifCLASSINFOpdf conditional that works the same way.
% When switching from latex to pdflatex and vice-versa, the compiler may
% have to be run twice to clear warning/error messages.

% *** CITATION PACKAGES ***
%
\ifCLASSOPTIONcompsoc
  % IEEE Computer Society needs nocompress option
  % requires cite.sty v4.0 or later (November 2003)
  \usepackage[nocompress]{cite}
\else
  % normal IEEE
  \usepackage{cite}
\fi
% cite.sty was written by Donald Arseneau
% V1.6 and later of IEEEtran pre-defines the format of the cite.sty package
% \cite{} output to follow that of the IEEE. Loading the cite package will
% result in citation numbers being automatically sorted and properly
% "compressed/ranged". e.g., [1], [9], [2], [7], [5], [6] without using
% cite.sty will become [1], [2], [5]--[7], [9] using cite.sty. cite.sty's
% \cite will automatically add leading space, if needed. Use cite.sty's
% noadjust option (cite.sty V3.8 and later) if you want to turn this off
% such as if a citation ever needs to be enclosed in parenthesis.
% cite.sty is already installed on most LaTeX systems. Be sure and use
% version 5.0 (2009-03-20) and later if using hyperref.sty.
% The latest version can be obtained at:
% http://www.ctan.org/pkg/cite
% The documentation is contained in the cite.sty file itself.
%
% Note that some packages require special options to format as the Computer
% Society requires. In particular, Computer Society  papers do not use
% compressed citation ranges as is done in typical IEEE papers
% (e.g., [1]-[4]). Instead, they list every citation separately in order
% (e.g., [1], [2], [3], [4]). To get the latter we need to load the cite
% package with the nocompress option which is supported by cite.sty v4.0
% and later. Note also the use of a CLASSOPTION conditional provided by
% IEEEtran.cls V1.7 and later.

% *** GRAPHICS RELATED PACKAGES ***
%
\ifCLASSINFOpdf
  % \usepackage[pdftex]{graphicx}
  % declare the path(s) where your graphic files are
  % \graphicspath{{../pdf/}{../jpeg/}}
  % and their extensions so you won't have to specify these with
  % every instance of \includegraphics
  % \DeclareGraphicsExtensions{.pdf,.jpeg,.png}
\else
  % or other class option (dvipsone, dvipdf, if not using dvips). graphicx
  % will default to the driver specified in the system graphics.cfg if no
  % driver is specified.
  % \usepackage[dvips]{graphicx}
  % declare the path(s) where your graphic files are
  % \graphicspath{{../eps/}}
  % and their extensions so you won't have to specify these with
  % every instance of \includegraphics
  % \DeclareGraphicsExtensions{.eps}
\fi
\hyphenation{op-tical net-works semi-conduc-tor}

\begin{document}
%
% paper title
% Titles are generally capitalized except for words such as a, an, and, as,
% at, but, by, for, in, nor, of, on, or, the, to and up, which are usually
% not capitalized unless they are the first or last word of the title.
% Linebreaks \\ can be used within to get better formatting as desired.
% Do not put math or special symbols in the title.
\title{\huge{Design and Analysis of Dynamic Auto Scaling Algorithm (DASA) for virtual EPC (vEPC) in 5G Networks}}
%
%
% author names and IEEE memberships
% note positions of commas and nonbreaking spaces ( ~ ) LaTeX will not break
% a structure at a ~ so this keeps an author's name from being broken across
% two lines.
% use \thanks{} to gain access to the first footnote area
% a separate \thanks must be used for each paragraph as LaTeX2e's \thanks
% was not built to handle multiple paragraphs
%
%
%\IEEEcompsocitemizethanks is a special \thanks that produces the bulleted
% lists the Computer Society journals use for "first footnote" author
% affiliations. Use \IEEEcompsocthanksitem which works much like \item
% for each affiliation group. When not in compsoc mode,
% \IEEEcompsocitemizethanks becomes like \thanks and
% \IEEEcompsocthanksitem becomes a line break with idention. This
% facilitates dual compilation, although admittedly the differences in the
% desired content of \author between the different types of papers makes a
% one-size-fits-all approach a daunting prospect. For instance, compsoc
% journal papers have the author affiliations above the "Manuscript
% received ..."  text while in non-compsoc journals this is reversed. Sigh.

\author{Yi Ren, \textit{Member, IEEE},  Tuan Phung-Duc,   and Jyh-Cheng Chen, \textit{Fellow, IEEE}% <-this % stops a space
\IEEEcompsocitemizethanks{\IEEEcompsocthanksitem A preliminary version of this paper was presented at 59th IEEE GLOBECOM, 2016, in Washington DC, USA.
\IEEEcompsocthanksitem Y. Ren, and J.-C. Chen are with Department of Computer Science, National Chiao Tung University, Taiwan, e-mail: \{renyi, jcc\}@cs.nctu.edu.tw
\IEEEcompsocthanksitem T. Phung-Duc is with Faculty of Engineering, Information and Systems, University of Tsukuba, Ibaraki, Japan, e-mail:tuan@sk.tsukuba.ac.jp.}%<-this % stops an unwanted space
%\thanks{Manuscript received April 19, 2005; revised August 26, 2015.}
}

\IEEEtitleabstractindextext{%
\begin{abstract}
Network Function Virtualization (NFV) enables mobile operators to virtualize their network entities as Virtualized Network Functions (VNFs), offering fine-grained on-demand network capabilities. VNFs can be dynamically scale-in/out to meet the performance requirements for future 5G networks. However, designing an auto-scaling algorithm with low operation cost and low latency while considering the capacity of legacy network equipment is a challenge. In this paper, we propose a VNF Dynamic Auto Scaling Algorithm (DASA) considering the tradeoff between performance and operation cost. We also develop an analytical model to quantify the tradeoff and validate the analysis through extensive simulations. The system is modeled as a queueing model while legacy network equipment is considered as a reserved block of servers. The VNF instances are powered on and off according to the number of job requests. The results show that the proposed DASA can significantly reduce operation cost given the latency upper-bound. Moreover, the models provide a quick way to evaluate the cost-performance tradeoff without wide deployment, which can save cost and time.
\end{abstract}

% Note that keywords are not normally used for peerreview papers.
\begin{IEEEkeywords}
Auto Scaling Algorithm, Modeling and Analysis, Network Function Virtualization (NFV), Virtual EPC, 5G, Cloud Networks
\end{IEEEkeywords}}

% make the title area
\maketitle

% To allow for easy dual compilation without having to reenter the
% abstract/keywords data, the \IEEEtitleabstractindextext text will
% not be used in maketitle, but will appear (i.e., to be "transported")
% here as \IEEEdisplaynontitleabstractindextext when the compsoc
% or transmag modes are not selected <OR> if conference mode is selected
% - because all conference papers position the abstract like regular
% papers do.
\IEEEdisplaynontitleabstractindextext
% \IEEEdisplaynontitleabstractindextext has no effect when using
% compsoc or transmag under a non-conference mode.

% For peer review papers, you can put extra information on the cover
% page as needed:
% \ifCLASSOPTIONpeerreview
% \begin{center} \bfseries EDICS Category: 3-BBND \end{center}
% \fi
%
% For peerreview papers, this IEEEtran command inserts a page break and
% creates the second title. It will be ignored for other modes.
\IEEEpeerreviewmaketitle

\IEEEraisesectionheading{\section{Introduction}\label{sec:Introduction}}

\IEEEPARstart{C}{ellular}  networks have been evolved to 4th generation (4G). Long Term Evolution-Advanced (LTE-A) has become a commonly used communication technology worldwide and is continuously expanding and evolving to 5th generation (5G). A recent report states that 95 operators have commercially launched LTE-A networks in 48 countries and the total smartphone traffic is expected to rise 11 times from 2015 to 2021~\cite{Ericsson2015}. Accordingly, operators are improving their network infrastructure to increase capacity and to meet the demand for fast-growing data traffic in 5G networks.

One of the most important technologies for 5G networks is to utilize Network Function Virtualization (NFV) to virtualize the network components in the core network which is called Evolved Packet Core (EPC). The virtualized EPC is commonly referred to as virtual EPC (vEPC)~\cite{ETSIGSNFVINF2015}. The emergence of NFV enables operators to manage their network equipment in a fine-grained and efficient way~\cite{hawilo2014nfv}. Indeed, legacy network infrastructure suffers from the nature that data traffic usually has peaks during a day while having relative low utilization in the rest of time (e.g., in the midnight).  To guarantee the Quality of user Experience (QoE), operators usually leave spare capacities to tackle the peak traffic while deploying network equipment. Accordingly, the network equipment are under low utilization during non-busy periods. NFV enables operators to virtualize hardware resources. It also makes special-purpose network equipment toward software solutions, i.e., Virtualized Network Function (VNF) instances. A VNF instance can run on several Virtual Machines (VMs) which can be \textit{scaled-out/in} to adjust the VNF's computing and networking capabilities to save both energy and resources. Although the idea is just being applied to cellular networks, it has been used in the community of cloud computing. A classic case is Animoto, an image-processing service provider, experienced a demand of surging from 50 VM instances to 4000 VM instances (Amazon EC2 instances) in three days in April 2008. After the peak, the demand fell sharply to an average level~\cite{Animoto}. Animoto only paid for 4000 instances for the peak time. In future 5G networks, it is expected that there will be heterogeneous types of traffic, including traffic from Human-to-Human (H2H) and Machine-to-Machine (M2M) communications. With such diverse traffic types, it is very likely similar case as that in Animoto will also happen to future 5G networks.

Given the fact that \textit{auto-scaling} VNF instance can decrease operation cost while meeting the demand for VNF service, it is critical to design good strategies to allocate VNF instances adaptively to fulfill the demands of service requirements. However, it is not a trivial task. Specifically, the operation cost is reduced by decreasing the number of powered-on VNF instances. On the other hand, resource under-provisioning may cause Service Level Agreement (SLA) violations. Therefore, the goal of a desirable strategy is to reduce operation cost while also maintaining acceptable levels of performance. Thus, a \textit{cost-performance tradeoff} is formed: The VNF performance is improved by \textit{scaling-out} VNF instances while the operation cost is reduced by \textit{scaling-in} VNF instances.

Given that legacy equipment usually is expensive in cellular networks, network operators usually power on network equipment all the time and try to use the equipment as long as they can to maximize Return On Investment (ROI). Thus, nowadays most operators operate old generation systems and new generation systems simultaneously. For example, many operators offer 3G and 4G services at the same time. In future 5G systems, virtualized resources would be added to boost the system performance of legacy 4G systems, leading to the evolution from EPC to vEPC\footnote{Details will be discussed in Section~\ref{sec:Background}.}. It is expected that 5G and legacy systems will coexist.

In this paper, we study the cost-performance tradeoff while considering both the \textit{VM setup time} and \textit{legacy equipment capacity}. In~\cite{xiao2013dynamic,jokhio2013prediction,roy2011efficient,tirado2011predictive, niu2012quality,huang2013migration,huang2014prediction,calheiros2014workload, islam2012empirical,bashar2013autonomic,bankole2013cloud,  shen2011cloudscale,khan2012workload}, they either ignore VM setup time or only consider virtualized resource itself while overlooking legacy (fixed) resources. However, this is not practical for future 5G cellular networks as follows:

\begin{itemize}
	
	\item Although a scale-out request can be sent right way, a VNF instance cannot be available immediately.  The lag time could be as long as 10 minutes or more to start an instance in Microsoft Azure and it could be varied from time to time~\cite{hill2010early}. It could happen that the instance is too late to serve the VNF if the lag time is not taken into consideration.
	
	\item The capacity of legacy network equipment is another issue. For example, a network operator with legacy network equipment wants to increase network capacities by using NFV technique. The desired solution should consider the capacities of both legacy network equipment and VNFs. If the capacity of a legacy network equipment is only equal to that of one VNF,	scaling-out from one VNF instance to two VNF instances increases $100\%$ capacity. However, if the capacity of a legacy network equipment is equal to that of 100 VNF, its capacity only grows less than $1\%$ when adding one more VNF instance. Current cloud auto-scaling schemes usually ignore this problem which is called \textit{non-constant} issue~\cite{mao2010cloud}. In other words, the capacity of legacy network equipment has a significant impact on the desired auto-scaling solution for future 5G systems.
	
\end{itemize}

In this paper, we propose Dynamic Auto Scaling Algorithm (DASA) to solve the problems.  In the proposed DASA, we consider that legacy 4G network equipment is powered on all the time as a block, while virtualized resources (VNF instances) are added to or deleted from the system dynamically. To the best of our knowledge, this has not been discussed in any previous literature. The VNF instances are scaled in and out depending on the number of jobs in the system. A critical issue is how to specify a suitable $k$, the number of VNF instances, for the cost-performance tradeoff. We propose detailed analytical models to answer this question. The cost-performance tradeoff is quantified as \textit{operation cost metric} and  \textit{performance metric} of which closed-form solutions are derived and validated against extensive discrete-event simulations. Moreover, we develop a recursive algorithm that reduces the computational complexity from  $O(k^3 \times K^3)$ to $O(k \times K)$, where $K$ is the total capacity of the system. Without our algorithm, it is difficult to solve the problem in a short time. Our models enable wide applicability in various scenarios, and therefore, have important theoretical significance. Furthermore, this work offers network operators guidelines to design an optimal VNF auto-scaling strategy based on their management policies in a systematical way.

The rest of this paper is organized as follows. In Section~\ref{sec:Background}, we  briefly introduce some background on mobile networks and NFV architecture. Section~\ref{sec:Related_Work} reviews the related work. Challenges and contributions are addressed in Section~\ref{sec:chall}. In Section~\ref{sec:Proposed_Algorithm}, we presents the proposed DASA for VNF auto-scaling applications, followed by numerical results illustrated in Section~\ref{sec:Simulation_Results}.   Section~\ref{sec:Conclusions} concludes this paper.

\section{Background} \label{sec:Background}
\begin{figure}
	\centering
	\includegraphics[width=9cm]{./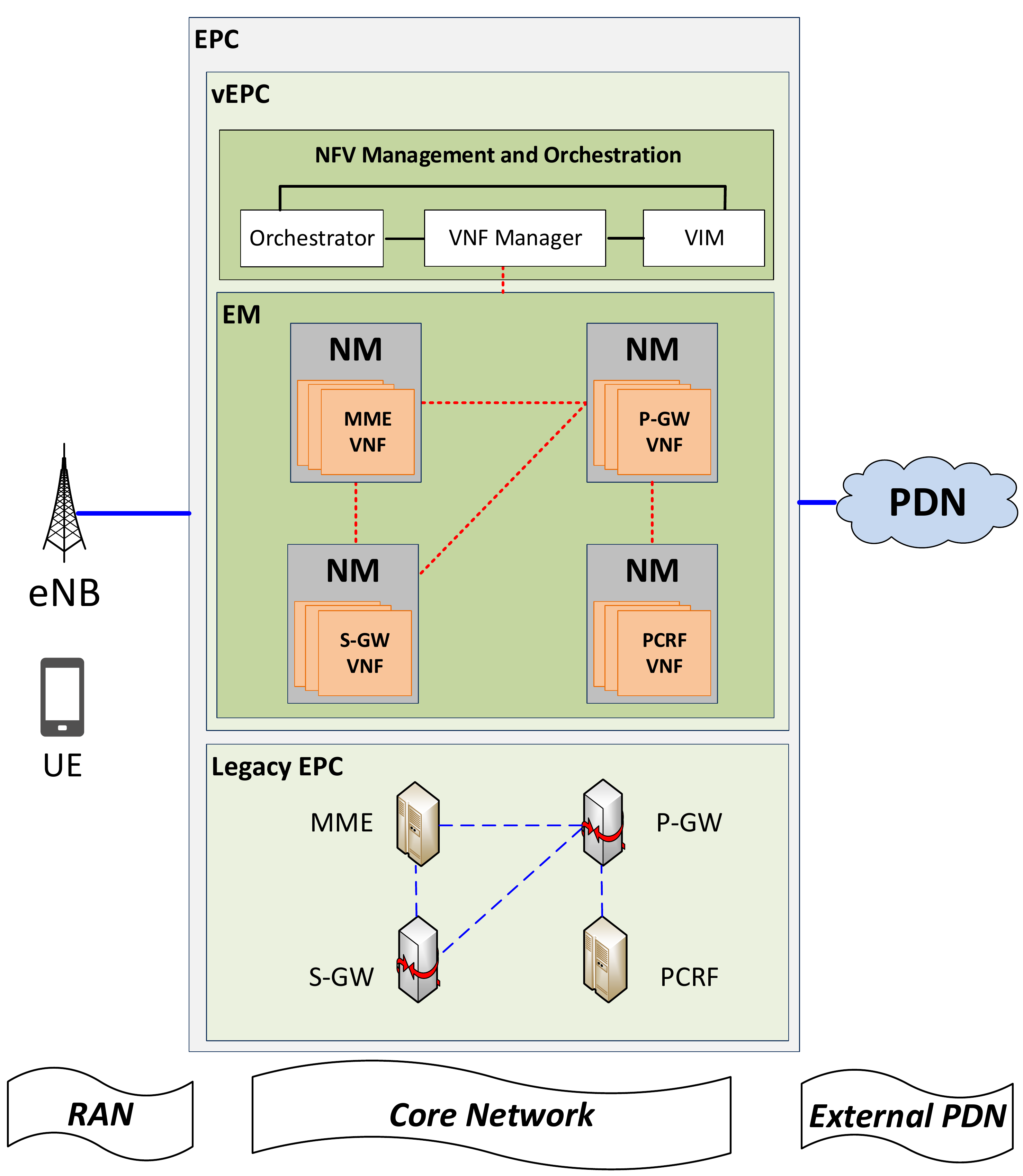}
	\caption{A simplified example of NFV enabled LTE architecture.}
	\label{fig:vEPC_arch}
	\vspace{-5mm}
\end{figure}
A cellular network typical is composed of Radio Access Network (RAN) and  Core Network (CN) as shown in Fig.~\ref{fig:vEPC_arch}~\cite{chen2004ip}. Starting from release 8 in 3GPP, the RAN and CN are referred to as Evolved-UTRAN (E-UTRAN) and EPC, respectively. The main target of NFV we consider in this paper is to virtualize  the functions in the EPC. Here, we use an example to explain EPC and vEPC when NFV is deployed. Fig.~\ref{fig:vEPC_arch}  shows a simplified example of NFV enabled LTE architecture which consists of RAN, EPC, and external Packet Data Network (PDN). In particular,  the EPC is composed of legacy EPC and vEPC. In the following, we brief introduce them.

\subsection{Legacy EPC}

In E-UTRAN, a User Equipment (UE) connects to EPC through an eNB, which essentially is a base station. Here, we only show basic functions in EPC, including Serving Gateway (S-GW), PDN Gateway (P-GW), Mobility Management Entity (MME), and Policy and Charging Rules Function (PCRF) in the EPC. Please refer to~\cite{TS23.002.2016} for details.

The P-GW is a gateway providing connectivity between EPC and an external PDN. The S-GW is responsible for user data functions enabling routing and packet forwarding to the P-GW. MME handles UE mobility and other control functions. PCRF is a policy and charging control element for policy enforcement, flow-based charging, and service data flow detection.

\subsection{vEPC}
Generally, the vEPC can be divided into two main components: \textit{NFV Management and Orchestration} and \textit{Element Manager (EM)} which are parts of 3GPP Management Reference Model~\cite{V13.1.02015}.

\subsubsection{NFV Management and Orchestration}
The \textit{NFV Management and Orchestration} consists of orchestrator, VNF manager, and Virtualized Infrastructure Manager (VIM). It controls the lifecycles of VNFs and decides whether a VNF should be scaled-out/in/up/down. Additionally, it manages both hardware and software resources to support VNFs. In other words, it can be considered as a bridge between network resources and VNFs. The actions of VNF scaling are described as follows.

\begin{itemize}
	\item VNF scale-in/out: As shown in Fig.~\ref{fig:VNF_example}, a VNF can have many VNF instances, inside which there may be many VMs. VNF scale-out refers to increase the number of VNF instances. In contrast, VNF scale-in is an action to remove existing VNF instances in a sense that virtualized hardware resources are freed and no longer needed.
	\item VNF scale-up/down: VNF scale-up allocates more VMs into an existing VNF instance. Whereas, VNF scale-down releases some VMs from an existing VNF instance.
\end{itemize}

\begin{figure}[t]
	\centering
	\includegraphics[width=5.5cm]{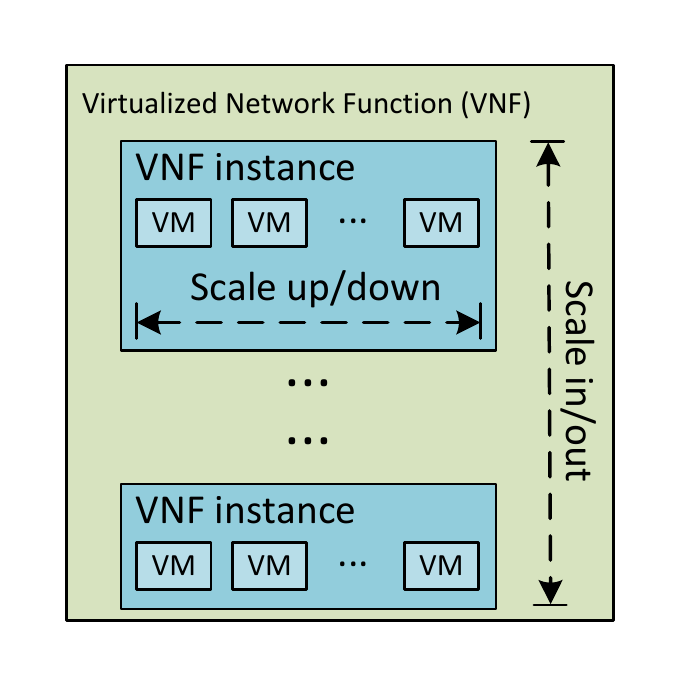}
	\caption{VNF scale up/down and scale in/out.}
	\label{fig:VNF_example}
\end{figure}

\subsubsection{Element Manager (EM)}
3GPP introduces NFV management functions and solutions for mobile core networks based on ETSI NFV specification~\cite{V13.1.02015}. In vEPC, each Network Element (NE) in legacy EPC such as S-GW, P-GW, MME, and PCRF is virtualized as a VNF.  As shown in Fig.~\ref{fig:vEPC_arch}, a Network Manager (NM) provides end-user functions for network management for each NE. Element Manager (EM) is responsible for the management of a set of NMs.

\begin{figure}[t]
	\centering
	\includegraphics[width=8.7cm]{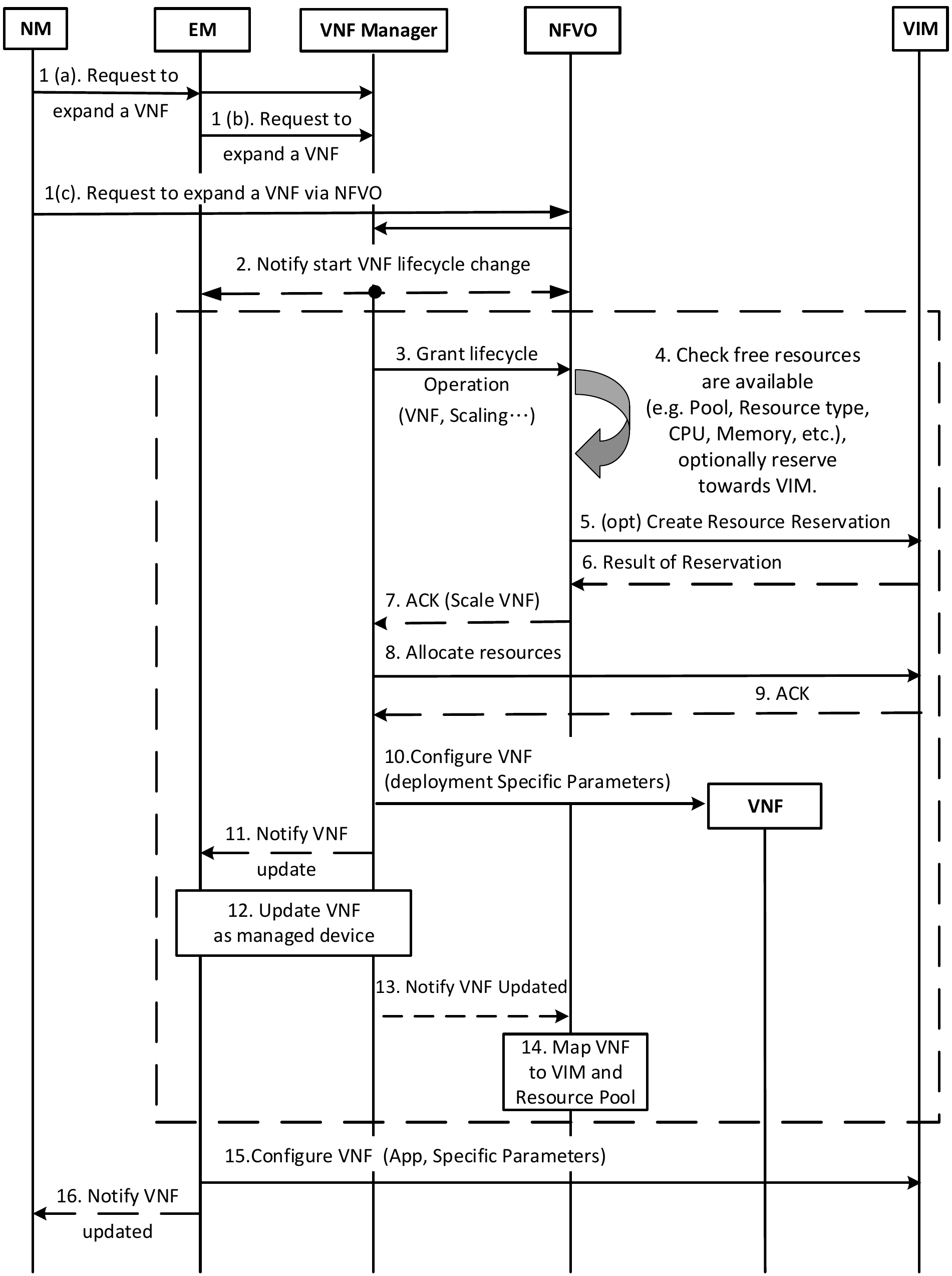}
	\caption{VNF instance expansion procedure triggered by NM/EM~\cite{V13.1.02015}.}
	\label{fig:VNF_expansion}
\end{figure}
\subsection{VNF Instance Scaling Procedures}

VNF manager allocates resources by using two scaling procedures: VNF instance expansion (scale-out and scale-up) procedure to add resources to a VNF, and VNF instance contraction (scale-in and scale-down) procedure to release the resources from a VNF.

\subsubsection{VNF instance expansion procedure}
Fig.~\ref{fig:VNF_expansion} illustrates the VNF instance expansion procedure. Here we briefly describe the flows. Please refer to~\cite{V13.1.02015} for details.
\begin{itemize}
	\item Step 1: NM/EM (via NFV Orchestrator, NFVO) sends capability expansion request to the VNF Manager, see 1(a), 1(b), and 1(c).
	\item Step 2: The VNF Manager sends lifecycle change notification to EM and NFVO indicating the start of the scaling operation.
	\item Step 3-14: The VNF Manager sends a request to the NFVO for the VNF expansion. The NFVO then checks whether free resources are available and send ACK/NACK to the VNF Manager for VNF expansion.
	\item Step 15: EM configures the VNF with application specific parameters.
	\item Step 16: EM notifies the newly updated and configured capacity to NM.
\end{itemize}

\subsubsection{VNF instance contraction procedure}
The idea of contraction procedure is similar to that of expansion procedure. Please refer to~\cite{V13.1.02015}.

\section{Related Work} \label{sec:Related_Work}

In cloud computing community, auto-scaling strategies have been studied intensively~\cite{xiao2013dynamic,jokhio2013prediction,roy2011efficient,tirado2011predictive, niu2012quality,huang2013migration,huang2014prediction,calheiros2014workload, islam2012empirical,bashar2013autonomic,bankole2013cloud,  shen2011cloudscale,khan2012workload,mitrani2013managing,Mitrani20111222, mitrani2013trading, hu2015power,phung2015multiserver}. To deal with the delay in VM setup, researchers have proposed various approaches to predict the VM load in order to boot VMs before existing VMs are overloaded. The approaches include Exponential weighted Moving Average (EMA)~\cite{xiao2013dynamic,jokhio2013prediction}, Auto-Regressive Moving Average (ARMA)~\cite{roy2011efficient,tirado2011predictive}, Auto-Regressive Integrated Moving Average (ARIMA)~\cite{niu2012quality,huang2013migration,huang2014prediction,calheiros2014workload}, machine learning~\cite{islam2012empirical, bashar2013autonomic,bankole2013cloud}, Markov model~\cite{shen2011cloudscale, khan2012workload}, and queueing model~\cite{mitrani2013managing,Mitrani20111222, mitrani2013trading, hu2015power,phung2015multiserver}.

The basic idea of EMA, ARMA, and ARIMA is moving average, where the most recent input data within a moving window are used to predict the next input data.
Specifically, in~\cite{xiao2013dynamic}, the authors proposed an EMA-based scheme to predict the CPU load. The scheme was implemented in Domain Name System (DNS) server and evaluation results showed that the capacities of servers are well utilized. The authors of~\cite{jokhio2013prediction} introduced a novel prediction-based dynamic resource allocation algorithm to scale video transcoding service in the cloud. They used a two-step prediction to predict the load, resulting in a reduced number of required VMs.

ARMA adds autoregressive (AR) into moving average. A resource allocation algorithm based on ARMA model was reported in~\cite{roy2011efficient}, where empirical results showed significant benefits both to cloud users and cloud service providers. In~\cite{tirado2011predictive}, the authors addressed a load forecasting model based on ARMA, which achieved around $6\%$ prediction error rate and saved up to $44\%$ hardware resources compared with random content-based distribution policy.

Unlike ARMA and ARIMA which differentiate input data, the authors of~\cite{niu2012quality} proposed a predictive and elastic cloud bandwidth auto-scaling system considering multiple data centers. This is the first work for linear scaling from multiple cloud service providers. The work of~\cite{huang2013migration} took the VM migration overhead into account when designing their auto-scaling scheme, where extensive experiments were conducted to demonstrate the performance. In~\cite{huang2014prediction}, a new problem of dynamic workload fluctuations of each VM and the resource conflict handling were addressed.  The authors further proposed an ARIMA-based server state predictor to adaptively allocate resource to VMs. Experiments showed that the state predictor achieved excellent prediction results. Another ARIMA-based workload prediction scheme was proposed in~\cite{calheiros2014workload}, where real traces of requests to web servers from the Wikimedia Foundation was used to evaluate its prediction accuracy. The results showed that the model achieved up to $91\%$ accuracy.

Machine learning approaches are also used for the design of cloud auto-scaling algorithms~\cite{islam2012empirical,bashar2013autonomic,bankole2013cloud}. The authors of~\cite{islam2012empirical} proposed a neural network and linear regression based auto-scaling algorithm. The author of~\cite{bashar2013autonomic} implemented a Bayesian Network based cloud auto-scaling algorithm. In~\cite{bankole2013cloud}, the authors evaluated three machine learning approaches: linear regression, neural network, and Support Vector Machine (SVM). Their results showed that SVM-based scheme outperforms the other two.

Markov model also has been widely used in cloud auto-scaling algorithms~\cite{shen2011cloudscale, khan2012workload}. The authors of~\cite{shen2011cloudscale} developed CloudScale, an automatic elastic resource scaling system for multiple cloud service providers, saving $8-10\%$ total energy consumption and $39-71\%$ workload energy consumption with little impact on the application performance. In~\cite{khan2012workload}, the authors proposed a novel multiple time series approach based on Hidden Markov Model (HMM). The technique well characterized the temporal correlations in the discovered VM clusters to predict variations of workload patterns.

However, \textit{the mechanisms in~\cite{xiao2013dynamic,jokhio2013prediction,roy2011efficient,tirado2011predictive, niu2012quality,huang2013migration,huang2014prediction,calheiros2014workload, islam2012empirical,bashar2013autonomic,bankole2013cloud,  shen2011cloudscale,khan2012workload} either ignore VM setup time or only consider virtualized resource itself while overlooking legacy (fixed) resources}. As aforementioned discussion, this is not practical for future 5G  cellular networks. Perhaps the closest models to ours were studied in~\cite{mitrani2013managing,Mitrani20111222, mitrani2013trading, hu2015power,phung2015multiserver} that both the capacities of fixed legacy network equipment and dynamic auto-scaling cloud servers are considered. The authors of~\cite{mitrani2013managing,Mitrani20111222} considered setup time without defections~\cite{mitrani2013managing} and with defections~\cite{Mitrani20111222}. Our recent work~\cite{hu2015power} relaxes the assumption in~\cite{mitrani2013managing,Mitrani20111222} such that after a setup time, all of the cloud servers in the block are active concurrently. We further consider a more realistic model that each server has an independent setup time. However, in~\cite{mitrani2013managing,Mitrani20111222, hu2015power}, all of the cloud servers were assumed as a whole block, which is not practical because each cloud server should be allowed to scale-out/in individually and dynamically. In~\cite{mitrani2013trading,phung2015multiserver}, it was relaxed to sub-blocks without considering all cloud servers as a whole block. However, either setup time is ignored~\cite{mitrani2013trading}, or legacy network capacity is not considered~\cite{phung2015multiserver}.

\section{Challenges and Contributions}
\label{sec:chall}
In this section, we summarize the challenges. How do we tackle the challenges and our contributions are also discussed.

\begin{itemize}
	\item The first challenge lies in the tradeoff between the operation cost and system performance, which is referred to as \textit{cost-performance tradeoff}. Keeping redundant powered-on VNF instances increases system performance and QoE. When a job arrives at the system, redundant powered-on VNF instances can serve the job immediately, which reduces job waiting time. On the other hand, the redundant powered-on VNF instances lead to extra operation cost. In this paper, we develop an analytical model to quantify the tradeoff.  Given the analytical model, operators can	quickly obtain the operation cost and system performance without real deployment to save cost and time.
	
	\item  The second challenge is how to count the capacity of legacy equipment  and how to choose a suitable value for the number of VNF instances to balance the cost-performance tradeoff. As aforementioned discussion, when the capacity of legacy equipment is counted as 10 VNF instances, a new power-up VNF instance increases $10\%$ system capability. Whereas, only $1\%$ capability is added if the legacy equipment are considered as 100 VNF instances. In addition, the power-up process is not instantaneous. During this setup process, the VNF instances consume power but cannot serve jobs. Based on our proposed analytical model, one can easily obtain the impacts of the capacity of legacy equipment and  the number of VNF instances to minimize the cost function.
	
	\item The third challenge is to propose a lightweight analytical model to quantify the tradeoff. In general, the computational cost to solve a Markov chain with $M$ states by a naive algorithm is $O(M^3)$. Thus, when $M$ is large, it is difficult to solve it in a short time.  In our proposed analytical model, we propose a novel recursive algorithm to reduce the computational complexity from $O(M^3)$ to $O(M)$, which is the same as the number of states of the Markov chain. The reduction is significant.
	
	\item Another challenge is how to adjust the auto-scaling algorithm in terms of different weighting factors for operation cost and system performance. Because different mobile operators may have different  management policies and operational interests, the weighting factors should be determined by a mobile operator. The adjustment of the algorithm according to different weighting factor is critical and non-trivial. Our proposed auto-scaling algorithm takes the weighting factors into consideration.
\end{itemize}

\section{Proposed Dynamic Auto-Scaling Algorithm (DASA)} \label{sec:Proposed_Algorithm}
In this section, we first introduce the system model. We then discuss the proposed DASA. The parameters used in the model are listed in Table~\ref{tab:ParameterSetting}.

\subsection{System Model}\label{ssec:System_Model}

We consider a 5G EPC comprised of both legacy network entities (e.g., MME, PCRF) and VNFs. A VNF, consisting of $k$ VNF instances, offers fine-grained on-demand network capabilities  to its corresponding legacy network entity. As shown in Fig.~\ref{fig:Queueing_model_special_case}, we assume that the capacity of the legacy network entity equals to $n_0$ numbers of VNF instances. The total capacity of the system is $N$ numbers of VNF instances ($N=n_0+k$), which can be adjusted adaptively depending on the number of $k$. We assume that $n_1=n_0+1$ and $n_i=n_{i-1}+1$ ($i=1, 2, \cdots k$). It should be noted that $n_k=N$.  User request arrives with rate $\lambda$. A VNF instance accepts one job at a time with service rate $\mu$. There is a limited First-Come-First-Served (FCFS) queue for those requests that have to wait to be processed. The legacy network equipment is always on while VNF instances will be added (or removed) according to the number of waiting jobs in the queue. It is worth to mention that the VNF instances need some setup time to be available so as to process waiting requests. During the setup time, the VNF instance consumes power\footnote{Here, the power may include server operation cost or VM cost charged by a cloud service provider.} but cannot serve jobs.

\begin{table}[t]
	\small
	\caption{List of Notations}
	\centering  \label{tab:ParameterSetting}
	\begin{tabular}{|c|p{7.5cm}|}
		\hline
				$N$           & the total capacity of the system                          \\ \hline
		$K$            & the maximum number of  jobs can be accommodated in the system \\ \hline
		$k$                & the number of VNF instances   \\ \hline
	%	$P$           & System performance \\ \hline
		$W$            & average response time per job \\ \hline
		$W_q$            & average response time in the queue per job \\ \hline
		$S$                & average VNF cost  \\ \hline
		$w_1$       & weighting factor for $W_q$   \\ \hline
		$w_2$        & weighting factor for $S$     \\ \hline
		$n_{0}$      & the capacity of a legacy network entity                      \\ \hline
		$U_{i}$       & the up threshold to control the VNF instances           \\ \hline
		$D_{i}$       & the down threshold to control the VNF instances  \\ \hline
%		$m_{i}$      & The $i$-th VNF instance ($i = 1, 2, \cdots k$).                  \\ \hline
		$\lambda$ & job arrival rate                                \\ \hline
		$\mu$       & service rate for each VNF instance                   \\ \hline
		$\alpha$     & setup rate for each VNF instance              \\ \hline
	\end{tabular}
\end{table}
\begin{figure}[t]
	\centering
	\includegraphics[width=8.7cm]{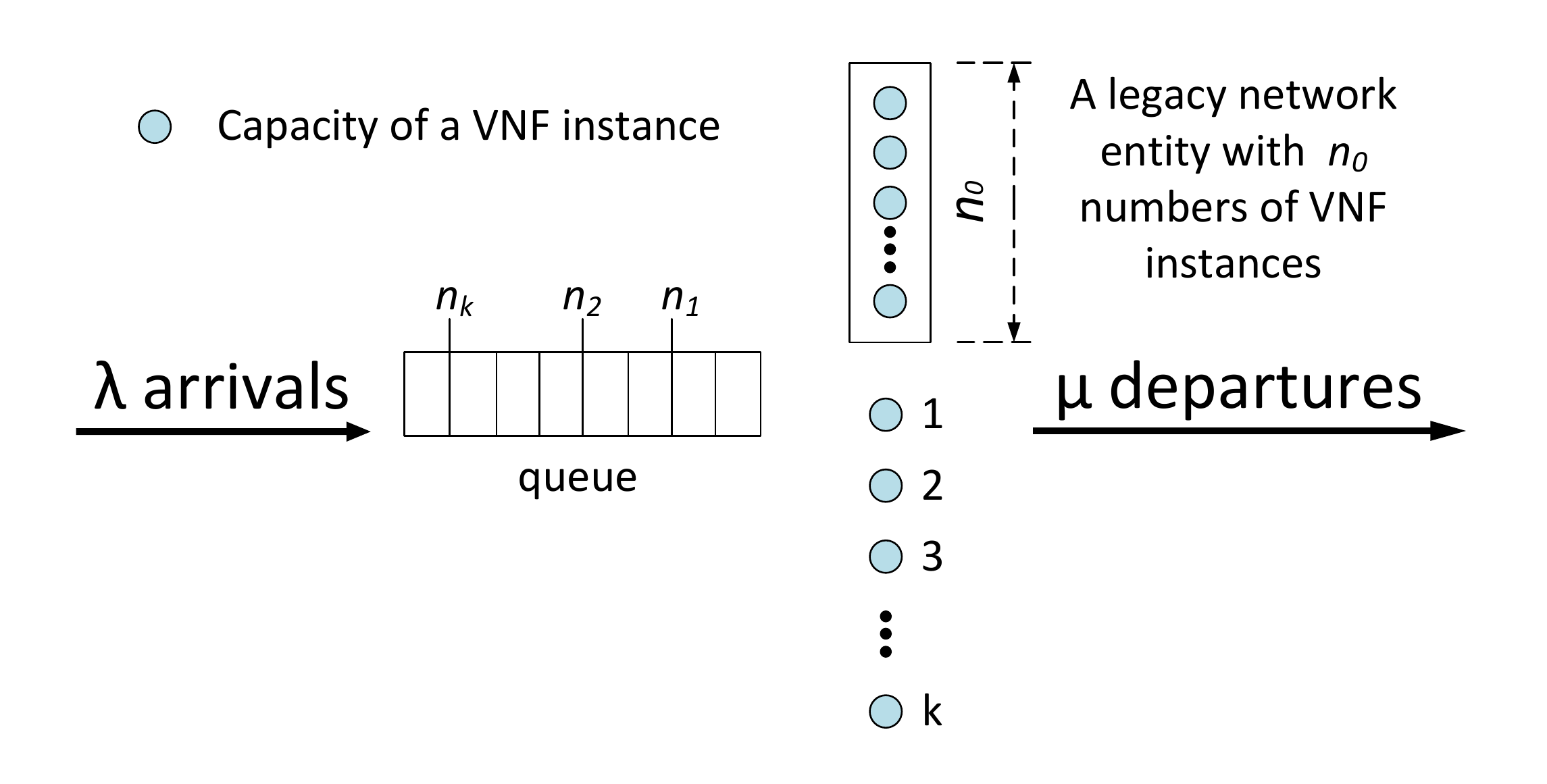}
	\caption{A simplified queueing model for our system.}
	\label{fig:Queueing_model_special_case}
	\vspace{-3mm}
\end{figure}

\subsection{Cost Function}
\label{ssec:cost}

Our goal is to design the \textit{best} auto-scaling strategy to minimize operation cost while providing acceptable levels of performance.  We use two thresholds, \textit{up} and \textit{down}, or $U_i$ and $D_i$, to denote the control of the VNF instances, where $i=1,2,\cdots, k$.
\begin{itemize}
	\item \textit{$U_i$, power up the $i$-th VNF instance:}  If the $i$-th VNF instance is turned off and the number of requests in the system increases from $U_i-1$ to $U_i$, the VNF instance is powered up after a setup time to support the system.  During the setup time, a VNF instance cannot serve user requests, but consumes power (or money for renting cloud services). Here, we specify $U_i=n_i$.
	\item \textit{$D_i$, power down the $i$-th VNF instance:}  If the $i$-th VNF instance is operative, and the number of requests in the system drops from $D_i+1$ to $D_i$, the VNF instance is powered down instantaneously. Here, we define $D_i = n_{i-1}$.
\end{itemize}
While powering up/down the VNF instance, the question is how many VNF instances we need such that the cost is minimized while the required level of performance is also met. Here, the cost function $C$ could be evaluated by two metrics: the average response time in the queue per request, $W_q$, and the average cost of VNF instances, $S$.  The mathematical formulation of the cost function\footnote{A variance of the cost function with more parameters can be found in Appendix.} can be written as:
\begin{equation}\label{eq:optimal_short}
\begin{aligned}
& \underset{}{\text{minimize}}
& & C =  w_1W_q+ w_2S\; ,\\
& \text{subject to}
& & 0<W_q<W_q'\; ,
\end{aligned}
\end{equation}
where  $W_q'$ is the upper bound of $W_q$, which can be determined by mobile operators according to their business policies. The coefficients of $w_1$ and $w_2$ denote the weighting factors for $W_q$ and $S$, respectively. Increasing $w_1$ (or $w_2$) emphasizes more on $W_q$ (or $S$). Here, we do not specify either $w_1$ or $w_2$ because such a value should be determined by mobile operators and should take management policies into consideration. An algorithm for finding the optimal solution is introduced in Section~\ref{ssec:algo} if $w_1$ and $w_2$ are specified.

\subsection{Derivation of Cost Function}
\label{ssec:deri}

We model the system as a queueing model with $N=n_0+k$ servers divided into two blocks: fixed block and dynamic block. The $n_0$ servers in fixed block are always on (refer to the capacity of legacy equipment).  The dynamic block denotes VNF instances in which $k$ servers are in either BUSY, OFF, or SETUP state. The queueing model has a capacity of $K$, i.e., the maximum number of jobs can be accommodated in the system is $K$. Job arrivals follow Poisson distribution with rate $\lambda$. A VNF instance, which is referred to as a server in the queueing system, accepts one job at a time, and its service rate follows exponential distribution with rate $\mu$. There is a limited FCFS queue for those jobs that have to wait to be processed.

In dynamic block, a server is turned off immediately if it has no job to serve. Upon arrival of a job, an OFF server is turned on if the job is placed in the buffer. However, a server needs some setup time to be active in order to serve waiting jobs. We assume that the setup time follows exponential distribution with mean $1/\alpha$. Let $j$ denotes the number of customers in the system and $i$ denotes the number of active servers in the dynamic block. The number of servers in SETUP status is $\min~(j-n_i, N-n_i)$.  We assume that waiting jobs are served according to FCFS. We call this model an $M$/$M$/$N$/$K$/$setup$ queue.

Here, we present a recursive algorithm to calculate the joint stationary distribution.
Let $C(t)$ and $L(t)$ denote the number of active servers in the dynamic block and the number of customers in the whole system, respectively. It is easy to see that $\{ X(t) = (C(t),L(t)); t \geq 0\}$ forms a Markov chain on the state space:
\begin{align}
\nonumber \mathcal{S} =  &   \{(i,j); 1 \leq i \leq k, j  = n_i,n_i+1,\dots,K-1,K\}\\
& \cup \{(0,j); j = 0,1,\dots,K-1,K\}.
\end{align}

% % % % % % % % % % % % % % % % % % % % % % % % % % % % % % % %
\begin{figure}[t]
	\begin{center}
		\includegraphics[width=8.6cm]{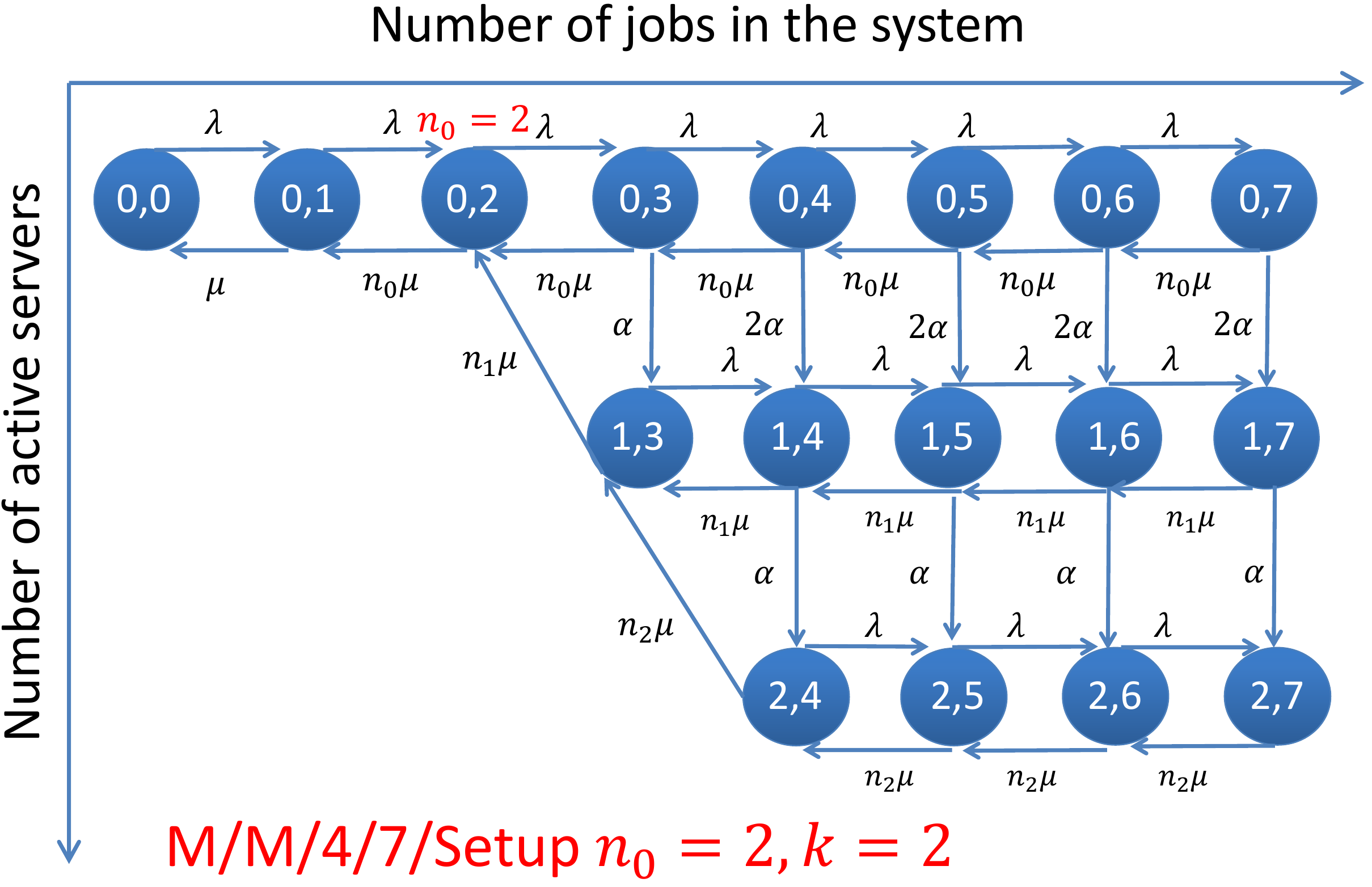}
		\caption{Transition among states ($N=4$, $n_0 = 2$, $k=2$, and $K=7$).}
		\label{fig:Markov_Chain_4_7}
	\end{center}
	%\vspace{-9mm}
\end{figure}
% % % % % % % % % % % % % % % % % % % % % % % % % % % % % % % % % %

Fig.~\ref{fig:Markov_Chain_4_7} shows the transition among states for the case  where $N=4$, $n_0 = 2$, $k=2$, and $K=7$.  Let
\begin{align}
\pi_{i,j} = \lim_{t \to \infty} {\rm P} (C(t) = i, L(t) = j), (i,j) \in \mathcal{S}
\end{align}
denote the joint stationary distribution of $\{X(t) \}$.
Here, we derive a recursion to calculate the joint stationary distribution $\pi_{i,j}$, $(i,j) \in \mathcal{S}$.

First, we consider a recursion for $\pi_{0,j}$ ($j = 0,1,\dots,K$). The balance equations for states with $i=0$ are given as follows:
\begin{align}
\lambda \pi_{0,j-1} \textit{} &=  j \mu \pi_{0,j}, \quad \mathrm{for}\;\; j = 0,1,\dots,n_0,  \\
\nonumber \lambda \pi_{0,j-1}  + n_0 \mu \pi_{0,j+1} &=  (\lambda + n_0 \mu\\
&+ \min (j-n_0,N-n_0) \alpha    ) \pi_{0,j} ,\\
\nonumber &\quad \mathrm{for}\;\; j = n_0,n_0 + 1, \dots, K-1, \\
\lambda \pi_{0,K-1}& =   (n_0 \mu + (N-n_0) \alpha) \pi_{0,K},
\end{align}
leading to:
\begin{align}\label{pi0j:eq}
\pi_{0,j} = b^{(0)}_j \pi_{0,j-1}, \qquad j=1,2,\dots, K.
\end{align}
The sequence $\{ b^{(0)}_j; j = 1,2,\dots,K \}$ is given as follows:
\begin{align}
b^{(0)}_j = \frac{\lambda}{j\mu}, \qquad j = 1,2,\dots,n_0,
\end{align}
and
\begin{align}
b^{(0)}_j = \frac{\lambda}{\lambda + n_0 \mu + \min (j-n_0,N-n_0) \alpha - n_0 \mu b^{(0)}_{j+1}},\\
\nonumber \qquad j = K-1,K-2,\dots,n_0 + 1,
\end{align}where
\[
b^{(0)}_K = \frac{\lambda}{ n_0 \mu + (N-n_0) \alpha }.
\]
Furthermore, it should be noted that $\pi_{1,n_1}$ is calculated using the local balance equation in and out the set $\{(0,j); j = 0,1,\dots,K\}$ as follows:

\begin{align}\label{expression_p11:eq}
n_1 \mu \pi_{1,n_1} = \sum_{j=n_1}^K \min (j,N-n_0) \alpha \pi_{0,j}.
\end{align}

\begin{remark}
	We have expressed $\pi_{0,j}$ ($j = 1,2,\dots,K$) and $\pi_{1,n_1}$ in terms of $\pi_{0,0}$.
\end{remark}

We consider the general case for $\pi_{i,j}$ where $1 \leq i \leq k-1$. Lemma~\ref{lemma:pi_ij} below shows that for a fixed $i = 1,2,\dots,k-1$, $\pi_{i,j}$ can be expressed in terms of $\pi_{i,j-1}$ ($j = n_i+1,n_i+2,\dots,K$). As a result, $\pi_{i,j}$ ($j = n_i+1,n_i+2,\dots,K$) is expressed in terms of $\pi_{i,n_i}$.
%
%Similar to the case $i=1$, we can prove the following result by using mathematical induction.
%
\begin{lem}\label{lemma:pi_ij}
	We have:
	\begin{align}
	\pi_{i,j} = a^{(i)}_j + b^{(i)}_j \pi_{i,j-1}, \qquad j = n_i + 1, n_i +2, \dots, K-1,K,
	\end{align}
	where
	\begin{align}\label{backward:i}
	a^{(i)}_j  & =  \frac{ n_i\mu a^{(i)}_{j+1} +  \min(N-n_{i-1},j-n_{i-1}) \alpha \pi_{i-1,j}  }{ \lambda + \min (N-n_i,j-n_i) \alpha + n_i \mu - n_i \mu b^{(i)}_{j+1}  }, \\
	\label{backward:ib}
	b^{(i)}_j  & =  \frac{\lambda}{ \lambda + \min (N-n_i,j-n_i) \alpha + n_i \mu - n_i \mu b^{(i)}_{j+1}  },
	\end{align}
	and
	\begin{align}
	a^{(i)}_K = \frac{(N-n_{i-1}) \alpha \pi_{i-1,K}}{(N-n_i) \alpha + n_i \mu}, \qquad b^{(i)}_K = \frac{\lambda }{(N-n_i) \alpha + n_i \mu}.
	\end{align}
\end{lem}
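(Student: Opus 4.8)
The plan is to prove Lemma~\ref{lemma:pi_ij} by backward induction on $j$, starting from the boundary $j=K$ and descending to $j=n_i+1$, exactly mirroring the argument already given for the special case $i=1$ in the commented-out Lemma. First I would write down the global balance equation at the boundary state $(i,K)$. Since $L(t)=K$ is the maximum occupancy, there is no arrival transition out of $(i,K)$, so the equation reads $((N-n_i)\alpha + n_i\mu)\pi_{i,K} = \lambda\pi_{i,K-1} + (N-n_{i-1})\alpha\,\pi_{i-1,K}$; here the inflow comes from a departure that cannot happen (the $\lambda\pi_{i,K-1}$ term, i.e.\ an arrival when the system has $K-1$ jobs and $i$ active servers), and from a setup completion in state $(i-1,K)$ where all $N-n_{i-1}$ servers beyond the $(i{-}1)$-st are simultaneously in SETUP. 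Solving for $\pi_{i,K}$ gives $\pi_{i,K} = a^{(i)}_K + b^{(i)}_K\pi_{i,K-1}$ with $a^{(i)}_K$ and $b^{(i)}_K$ as stated, which establishes the base case.

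Next I would state the induction hypothesis: assume $\pi_{i,j+1} = a^{(i)}_{j+1} + b^{(i)}_{j+1}\pi_{i,j}$ for some $j$ with $n_i+1 \le j \le K-1$. The inductive step uses the global balance equation at the interior state $(i,j)$,
\begin{align*}
(\lambda + \min(N-n_i,j-n_i)\alpha + n_i\mu)\pi_{i,j} = \lambda\pi_{i,j-1} + n_i\mu\,\pi_{i,j+1} + \min(N-n_{i-1},j-n_{i-1})\alpha\,\pi_{i-1,j}.
\end{align*}
I would substitute the induction hypothesis for $\pi_{i,j+1}$ on the right-hand side, collect the $\pi_{i,j}$ terms on the left, and divide through by the resulting coefficient $\lambda + \min(N-n_i,j-n_i)\alpha + n_i\mu - n_i\mu\, b^{(i)}_{j+1}$. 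This yields $\pi_{i,j} = a^{(i)}_j + b^{(i)}_j\pi_{i,j-1}$ with precisely the expressions in~\eqref{backward:i} and~\eqref{backward:ib}, completing the induction.

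One subtlety worth being explicit about is that the coefficient we divide by is nonzero (indeed positive), so the recursion is well defined; this is exactly the content of Lemma~\ref{theorem:i}, which bounds $b^{(i)}_{j+1}$ and guarantees $n_i\mu\, b^{(i)}_{j+1} < \lambda$, hence the denominator exceeds $\lambda + \min(N-n_i,j-n_i)\alpha > 0$. In the write-up I would either invoke that lemma or simply note that its proof runs in parallel (same backward induction), since the two lemmas are logically intertwined. The remaining minor care-point is bookkeeping on the $\min(\cdot,\cdot)$ terms: at state $(i,j)$ the number of servers in SETUP is $\min(j-n_i, N-n_i)$ coming out of the fixed-plus-$i$-active configuration, while the setup-completion inflow from level $i-1$ carries the factor $\min(j-n_{i-1}, N-n_{i-1})$; getting these indices right in the balance equation is the only place an error could creep in.

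The main obstacle is therefore not conceptual but notational — correctly writing the global balance equation at a generic state $(i,j)$ with the right SETUP-server counts and the right neighbor states — after which the algebra is a routine substitution and the induction closes immediately. No separate verification of positivity is needed within this proof, as it is deferred to Lemma~\ref{theorem:i}.
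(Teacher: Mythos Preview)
Your proposal is correct and matches the paper's proof essentially line for line: backward induction on $j$ starting from the balance equation at $(i,K)$ to get the base case, then substituting the hypothesis $\pi_{i,j+1}=a^{(i)}_{j+1}+b^{(i)}_{j+1}\pi_{i,j}$ into the interior balance equation at $(i,j)$ and solving for $\pi_{i,j}$. The only difference is that you add an explicit remark about the positivity of the denominator, which the paper defers entirely to Corollary~\ref{theorem:i}.
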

\begin{proof}
	The balance equation for state $(i,K)$ is given as follows:
	\begin{align}
	((N-n_i) \alpha + n_i \mu) \pi_{i,K} = \lambda \pi_{i,K-1} + (N-n_{i-1}) \alpha \pi_{i-1,K},
	\end{align}
	It leads to the fact that Lemma~\ref{lemma:pi_ij} is true for $j=K$.
	Assuming that:
	\begin{align}
	\pi_{i,j+1} = a^{(i)}_{j+1} + b^{(i)}_{j+1} \pi_{i,j}, \qquad j = n_i + 1, n_i +2, \dots, K-1.
	\end{align}
	Substituting this into the next balance equation:
	\begin{align}
	\nonumber 	(\lambda + \min (N-n_i,j-n_i) \alpha + n_i \mu ) \pi_{i,j} =  \lambda \pi_{i,j-1} \\
	+ n_i \mu \pi_{i,j+1} + \min(N-n_{i-1},j-n_{i-1}) \alpha \pi_{i-1,j}, \\
	\nonumber  \qquad j = K-1,K-2,\dots,n_i+1,
	\end{align}
	we obtain:
	\begin{align}
	\pi_{i,j} = a^{(i)}_{j} + b^{(i)}_{j} \pi_{i,j-1}.
	\end{align}
\end{proof}

\begin{remark}
	In Corollary~\ref{theorem:i} below, we will show that $a^{(i)}_j$ and $b^{(i)}_j$ are positive. Thus, the recursive algorithm is stable because it manipulates only positive numbers. Furthermore, we can also prove that $b^{(i)}_j$ is bounded from above.  Although we cannot obtain an explicit upper bound for $a^{(i)}_j$, from numerical experiments, we observe that $a^{(i)}_j$ is not so large.  One reason may be that the coefficient of $a^{(i)}_{j+1}$ in (\ref{aij_ineq}) is less than 1. These upper bounds are the rationale for the stability in our recursive algorithm because we deal with numbers that are not too large so that overflow is avoided.
\end{remark}

%\begin{remark}
%Similar to the case $i=1$, we can also obtain some bounds for $a^{(i)}_j, b^{(i)}_j$ which are the rationale for the stability of our recursive scheme.
%\end{remark}

\begin{coro}\label{theorem:i}
	We have the following bound.
	\begin{align}
	\label{aij_ineq}
	& 0 < a^{(i)}_j < \frac{ n_i\mu a^{(i)}_{j+1} +  \min(N-n_{i-1},j-n_{i-1}) \alpha \pi_{i-1,j}  }{n_i \mu  + \min(j-i,N-n_i) \alpha}, \\
	\label{bij_ineq}
	& 0 < b^{(i)}_j  < \frac{\lambda}{n_i \mu  + \min(j-i,N-n_i) \alpha},
	\end{align}
	for $j = n_i+1,n_i+2,\dots,K, i = 1,2,\dots,k-1$.
\end{coro}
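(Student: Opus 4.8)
The plan is to prove the two bounds (\ref{aij_ineq})--(\ref{bij_ineq}) by backward induction on $j$, from $j=K$ down to $j=n_i+1$, with $i$ held fixed, exactly in the spirit of the proof of Lemma~\ref{lemma:pi_ij}. The outer ``induction on $i$'' then amounts only to the observation that, by the time row $i$ is treated, the previous row's probabilities $\pi_{i-1,j}$ are known and non-negative: $\pi_{0,j}>0$ for all $j$ by (\ref{pi0j:eq}) (a product of the positive factors $b^{(0)}_j$ with $\pi_{0,0}>0$), and, inductively, once row $i-1$ is positive the local balance relation for $\pi_{i,n_i}$ together with $\pi_{i,j}=a^{(i)}_j+b^{(i)}_j\pi_{i,j-1}$ and the positivity we are about to establish make row $i$ positive as well. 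So the substance is the induction on $j$.

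For the base case $j=K$, Lemma~\ref{lemma:pi_ij} gives the closed forms $a^{(i)}_K=(N-n_{i-1})\alpha\,\pi_{i-1,K}/\big((N-n_i)\alpha+n_i\mu\big)$ and $b^{(i)}_K=\lambda/\big((N-n_i)\alpha+n_i\mu\big)$; since $\lambda,\mu,\alpha>0$, $N-n_i\ge 0$ and $\pi_{i-1,K}>0$, both are strictly positive and of the asserted size (at this endpoint the inequalities hold with equality; they become strict for $j\le K-1$ once the estimate below is in force). For the inductive step, assume the bounds at $j+1$. Discarding the non-negative $\alpha$-term in the $b$-bound gives $b^{(i)}_{j+1}<\lambda/(n_i\mu)$, i.e.\ $n_i\mu\,b^{(i)}_{j+1}<\lambda$. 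Substituting this into the common denominator of the recursions (\ref{backward:i})--(\ref{backward:ib}) gives
\[
\lambda+\min(N-n_i,j-n_i)\alpha+n_i\mu-n_i\mu\,b^{(i)}_{j+1}\;>\;n_i\mu+\min(N-n_i,j-n_i)\alpha\;>\;0 .
\]
Because the numerator of $b^{(i)}_j$ is $\lambda>0$, this yields $b^{(i)}_j>0$, and dividing $\lambda$ by the smaller positive denominator on the right gives an upper bound for $b^{(i)}_j$ of the asserted type. For $a^{(i)}_j$, the numerator $n_i\mu\,a^{(i)}_{j+1}+\min(N-n_{i-1},j-n_{i-1})\alpha\,\pi_{i-1,j}$ is strictly positive — the first term by the inductive hypothesis $a^{(i)}_{j+1}>0$, the second being non-negative — so $a^{(i)}_j>0$, and dividing by the same denominator and then enlarging it gives an upper bound for $a^{(i)}_j$ of the asserted type. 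This closes the induction.

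The computation is short, and the only genuinely delicate feature — which I would flag as the ``hard part'' — is its self-referential character: the quantity that must be controlled in the denominator, $n_i\mu\,b^{(i)}_{j+1}$, is bounded precisely by the $b$-inequality being proved, read one index higher. Hence the two bounds must be carried \emph{together} and in the stated form; one cannot prove a bound on all the $b^{(i)}_j$ first and then handle the $a^{(i)}_j$. A secondary subtlety is keeping the positivity of $a^{(i)}_j$ \emph{strict}: this rests on the strict hypothesis $a^{(i)}_{j+1}>0$ and ultimately on $\pi_{i-1,K}>0$; with only $\pi_{i-1,K}\ge 0$ one still gets $a^{(i)}_j\ge 0$, which already suffices for the stated purpose that the recursion manipulates non-negative numbers. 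Finally, the bound on $a^{(i)}_j$ is only relative to $a^{(i)}_{j+1}$, so no absolute bound on the $a$-sequence is produced or needed — consistent with the preceding Remark, the coefficient $n_i\mu/\big(n_i\mu+\min(N-n_i,j-n_i)\alpha\big)<1$ multiplying $a^{(i)}_{j+1}$ is what, in practice, prevents growth.
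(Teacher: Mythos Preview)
Your approach is essentially the same as the paper's: backward induction on $j$ from $K$, with the key step being that the inductive bound on $b^{(i)}_{j+1}$ gives $n_i\mu\,b^{(i)}_{j+1}<\lambda$, which forces the common denominator in (\ref{backward:i})--(\ref{backward:ib}) to exceed $n_i\mu+\min(N-n_i,j-n_i)\alpha$. You are in fact more careful than the paper on two points it glosses over: (i) the boundary case $j=K$, where the inequalities hold with equality rather than strictly, and (ii) the strict positivity of $\pi_{i-1,j}$ needed to make $a^{(i)}_j>0$ strict, which you correctly trace back through the rows via (\ref{pi0j:eq}) and the local balance relations.
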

\begin{proof}
	We prove it by using mathematical induction. It is clear that Corollary~\ref{theorem:i} is true for $j=K$. Assuming that Corollary~\ref{theorem:i} is true for $j+1$, i.e.,
	\begin{align}
	a^{(i)}_{j+1} > 0, \qquad 0 < b^{(i)}_{j+1}  < \frac{\lambda}{n_i\mu  + \min(j+1-n_i,N-n_i) \alpha},
	\end{align}
	for $j = n_i+1,n_i+2,\dots,K-1, i = 1,2,\dots,k-1$.
	It follows from the second inequality that $n_i\mu b^{(i)}_{j+1} < \lambda$. This together with Equations~(\ref{backward:i}) and (\ref{backward:ib}) yield the desired result.
\end{proof}

It should be noted that $\pi_{i+1,n_{i+1}}$ is calculated using the following local balance equation in and out the set of states:
\begin{align}
\{(k,j) \in \mathcal{S}; k=0,1,\dots,i \}
\end{align}
as follows:
\begin{align}\label{pii+1:eq}
n_{i+1} \mu \pi_{i+1,n_{i+1}} = \sum_{j=n_i+1}^K \min(j-n_i,N-n_i) \alpha \pi_{i,j}.
\end{align}
\begin{remark}
	We have expressed $\pi_{i,j}$ ($i=0,1,\dots,k-1, j = n_i,n_i+1,\dots,K$) and $\pi_{k,n_k}$ in terms of $\pi_{0,0}$.
\end{remark}

Finally, we consider the case $i = k$. The balance equation for state $(k,j)$ ($j = k,k+1,\dots,K$) leads to Lemma~\ref{lemma3}.
\begin{lem}
	\label{lemma3}
	We have:
	\begin{align}
	\pi_{k,j} = a^{(k)}_j + b^{(k)}_j \pi_{k,j-1}, \qquad j = n_k + 1, n_k +2, \dots, K,
	\end{align}
	where
	\begin{align}\label{backward:c}
	a^{(k)}_j  &=  \frac{n_k \mu a^{(k)}_{j+1} + (N-n_{k-1})\alpha \pi_{k-1,j}}{\lambda + n_k\mu - n_k \mu b^{(k)}_{j+1}}, \quad\\
	\nonumber j &= K-1, K-2, \dots, n_k+1,\\
	\label{backward:cb}
	b^{(k)}_j  &=  \frac{\lambda}{\lambda + n_k\mu - n_k \mu b^{(k)}_{j+1}}, \quad\\
	\nonumber j &= K-1, K-2, \dots, n_k+1,
	\end{align}
	and
	\begin{align}
	a^{(k)}_K = \frac{\alpha \pi_{k-1,K}}{n_k \mu}, \qquad b^{(k)}_K = \frac{\lambda}{n_k \mu}.
	\end{align}
\end{lem}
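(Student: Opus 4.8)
The plan is to mirror the argument used for Lemma~\ref{lemma:pi_ij}, specialising it to the top layer $i=k$, where a decisive simplification occurs: since $n_k = N$, no server in the dynamic block can be in the SETUP state once all $k$ instances are active, so every term of the form $\min(N-n_k,\,j-n_k)\alpha$ vanishes and the only transitions out of a state $(k,j)$ are an arrival (rate $\lambda$, unless $j=K$) and a service completion (rate $n_k\mu$). The recursion is then obtained by a backward induction on $j$, starting from the capacity state $(k,K)$ and descending to $j=n_k+1$.

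First I would write the global balance equation at $(k,K)$. The outflow is $n_k\mu\,\pi_{k,K}$ (no arrivals are admitted at capacity and there is no SETUP transition), while the inflow is an arrival from $(k,K-1)$ together with the completion of the last remaining setup from $(k-1,K)$, whose rate is $(N-n_{k-1})\alpha=\alpha$ since $n_k=n_{k-1}+1$. This gives $n_k\mu\,\pi_{k,K}=\lambda\,\pi_{k,K-1}+(N-n_{k-1})\alpha\,\pi_{k-1,K}$, i.e.\ $\pi_{k,K}=a^{(k)}_K+b^{(k)}_K\,\pi_{k,K-1}$ with the stated boundary values, establishing the claim for $j=K$.

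For the inductive step, assume $\pi_{k,j+1}=a^{(k)}_{j+1}+b^{(k)}_{j+1}\,\pi_{k,j}$ for some $j$ with $n_k+1\le j\le K-1$. The balance equation at $(k,j)$ reads $(\lambda+n_k\mu)\,\pi_{k,j}=\lambda\,\pi_{k,j-1}+n_k\mu\,\pi_{k,j+1}+(N-n_{k-1})\alpha\,\pi_{k-1,j}$. Substituting the inductive hypothesis for $\pi_{k,j+1}$ and collecting the $\pi_{k,j}$ terms yields $(\lambda+n_k\mu-n_k\mu\,b^{(k)}_{j+1})\,\pi_{k,j}=\lambda\,\pi_{k,j-1}+n_k\mu\,a^{(k)}_{j+1}+(N-n_{k-1})\alpha\,\pi_{k-1,j}$, which is precisely $\pi_{k,j}=a^{(k)}_j+b^{(k)}_j\,\pi_{k,j-1}$ with $a^{(k)}_j$, $b^{(k)}_j$ as in (\ref{backward:c})--(\ref{backward:cb}). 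This closes the induction, and since $\pi_{k,n_k}$ has already been expressed through $\pi_{0,0}$ by (\ref{pii+1:eq}) with $i=k-1$, all $\pi_{k,j}$ are determined.

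The step I expect to require the most care is not the algebra but the bookkeeping of the balance equation at $(k,j)$: one must verify that for $i=k$ the SETUP contribution to the \emph{outflow} is genuinely absent (because $N-n_k=0$) while the SETUP contribution to the \emph{inflow} from the layer below, $(N-n_{k-1})\alpha\,\pi_{k-1,j}$, persists. A secondary point worth flagging — though it belongs to Theorem~\ref{ab_cj:thm} rather than to the lemma itself — is that the recursion only makes sense once the denominator $\lambda+n_k\mu-n_k\mu\,b^{(k)}_{j+1}$ is shown to be strictly positive; here one merely records the recursion and defers positivity, exactly as was done after Lemma~\ref{lemma:pi_ij}.
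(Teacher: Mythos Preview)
Your proposal is correct and follows the same approach as the paper: backward induction on $j$ starting from the balance equation at $(k,K)$, then substituting the inductive hypothesis into the balance equation at $(k,j)$ to recover the recursion. Your additional remarks on why the SETUP outflow vanishes at level $i=k$ and on deferring positivity of the denominator to Corollary~\ref{ab_cj:thm} are helpful elaborations but do not change the structure of the argument.
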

\begin{proof}
	The global balance equation in state $(k,K)$ is given by:
	\begin{align}
	n_k \mu \pi_{k,K} = (N-n_{k-1}) \alpha \pi_{k-1,K} + \lambda \pi_{k,K-1},
	\end{align}
	leading to:
	\begin{align}
	\pi_{k,K} = a^{(k)}_K + b^{(k)}_K \pi_{k,K-1}.
	\end{align}
	Assuming that $\pi_{k,j+1} = a^{(k)}_{j+1} + b^{(k)}_{j+1} \pi_{k,j}$,  it follows from this formula and the global balance equation in state $(k,j)$:
	\begin{align}
	\nonumber 	(\lambda + n_k \mu) \pi_{k,j} =& \lambda \pi_{k,j-1} + n_k \mu \pi_{n_k,j+1}\\
	&+ (N-n_{k-1}) \alpha \pi_{k-1,j},\\
	\nonumber		j = &n_k+1,n_k+2,\dots,K-1,
	\end{align}that $\pi_{k,j} = a^{(k)}_j + b^{(k)}_j \pi_{k,j-1}$ for $j=n_k+1,n_k+2,\dots,K$.
\end{proof}
\begin{coro}\label{ab_cj:thm}
	We have the following bound.
	\begin{align}
	a^{(k)}_j >0, \qquad 0 < b^{(k)}_j  < \frac{\lambda}{n_k \mu}, \qquad \\
	\nonumber j = n_k+1,n_k+2,\dots,K-1.
	\end{align}
\end{coro}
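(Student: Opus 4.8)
The plan is to mirror exactly the induction used to prove Corollary~\ref{theorem:i}, since the recursion for $a^{(k)}_j$ and $b^{(k)}_j$ in (\ref{backward:c})--(\ref{backward:cb}) has the same structure as (\ref{backward:i})--(\ref{backward:ib}), only with the $\min(N-n_k,j-n_k)\alpha$ term absent (because for $i=k$ all dynamic servers are already active, so there is no further setup). First I would establish the base case $j=K$: from $a^{(k)}_K = \alpha \pi_{k-1,K}/(n_k\mu)$ and $b^{(k)}_K = \lambda/(n_k\mu)$, positivity of $a^{(k)}_K$ follows from $\pi_{k-1,K}\ge 0$ (indeed $>0$, as it lies in the support of the stationary distribution), and $b^{(k)}_K = \lambda/(n_k\mu)$, which trivially satisfies the claimed strict bound at $j=K$ if we interpret it as $\le$; the statement is only asserted for $j = n_k+1,\dots,K-1$, so the base case I actually need is $j=K-1$, obtained by plugging $b^{(k)}_K = \lambda/(n_k\mu)$ into (\ref{backward:cb}).

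Next I would carry out the inductive step. Assume $a^{(k)}_{j+1}>0$ and $0<b^{(k)}_{j+1}<\lambda/(n_k\mu)$. The second inequality gives $n_k\mu\, b^{(k)}_{j+1} < \lambda$, hence the denominator $\lambda + n_k\mu - n_k\mu\, b^{(k)}_{j+1}$ appearing in both (\ref{backward:c}) and (\ref{backward:cb}) is strictly positive and in fact strictly greater than $n_k\mu$ (since $\lambda - n_k\mu\, b^{(k)}_{j+1} > 0$). From the positivity of the denominator together with $a^{(k)}_{j+1}>0$ and $\pi_{k-1,j}\ge 0$, the numerator of (\ref{backward:c}) is positive, so $a^{(k)}_j>0$. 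For $b^{(k)}_j$, the numerator is $\lambda>0$ and the denominator exceeds $n_k\mu$, giving $0<b^{(k)}_j<\lambda/(n_k\mu)$, which closes the induction.

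I do not expect any genuine obstacle here: the argument is a routine downward induction on $j$, identical in spirit to the proofs of Lemma~\ref{theorem1:thm} and Corollary~\ref{theorem:i}. The only point requiring a word of care is the direction of the inequality on $b^{(k)}_j$ — one must use $n_k\mu\, b^{(k)}_{j+1} < \lambda$ to conclude the denominator is bounded below by $n_k\mu$, and not merely that it is positive — and the boundary bookkeeping, namely that the stated range is $j = n_k+1,\dots,K-1$ while the recursion is seeded at $j=K$. A short remark could also note, in parallel with the remark following Lemma~\ref{lemma:pi_ij}, that the coefficient of $a^{(k)}_{j+1}$ in (\ref{backward:c}) is $n_k\mu/(\lambda+n_k\mu-n_k\mu\,b^{(k)}_{j+1}) < 1$, which together with the boundedness of $b^{(k)}_j$ explains why the recursion manipulates only moderately-sized positive numbers and therefore is numerically stable.
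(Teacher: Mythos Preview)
Your proposal is correct and follows essentially the same downward induction on $j$ as the paper's proof: assume the bounds at $j+1$, deduce $n_k\mu\,b^{(k)}_{j+1}<\lambda$, and feed this into (\ref{backward:c})--(\ref{backward:cb}) to obtain the bounds at $j$. Your handling of the boundary is in fact more careful than the paper's, which simply asserts the base case at $j=K$ even though $b^{(k)}_K=\lambda/(n_k\mu)$ meets the upper bound with equality; your observation that the stated range is $j=n_k+1,\dots,K-1$ and that the strict inequality first kicks in at $j=K-1$ resolves this cleanly.
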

\begin{proof}
	We also prove it by using mathematical induction. It is clear that Corollary~\ref{ab_cj:thm} is true for $j=K$. Assuming that Corollary~\ref{ab_cj:thm} is true for $j+1$, i.e.,
	\begin{align}
	a^{(k)}_{j+1} >0, \qquad 0 < b^{(k)}_{j+1}  < \frac{\lambda}{n_k \mu}, \\
	\nonumber		\qquad j = n_k+1,n_k+2,\dots,K-1.
	\end{align}
	
	It follows from the second inequality that $n_k \mu b^{(k)}_{j+1} < \lambda$. This together with Equations~(\ref{backward:c}) and (\ref{backward:cb}) yield the desired result.
\end{proof}

We have expressed all of the probabilities $\pi_{i,j}$ ($(i,j) \in \mathcal{S}$) in terms of $\pi_{0,0}$ which is uniquely determined by the normalizing condition.
\begin{align}
\sum_{(i,j) \in \mathcal{S}} \pi_{i,j} = 1.
\end{align}

\begin{remark}
	In summary, we can calculate all probabilities $\pi_{i,j}$ ($(i,j) \in \mathcal{S}$) in the following order.
	First, we set $\pi_{0,0} = 1$. We then calculate all the probabilities $\pi_{0,j}$ ($j = 1,2,\dots,K$) using Equation (\ref{pi0j:eq}). Next, $\pi_{1,n_1}$ is calculated using (\ref{expression_p11:eq}). After that, we apply Lemma~\ref{lemma:pi_ij} and (\ref{pii+1:eq}) repeatedly for $i=1,2,\dots,k-1$. At this point, $\pi_{i,j}$ for $i=0,1,\dots,k-1$ and $\pi_{k,n_k}$ are obtained. Furthermore, we use  Lemma~\ref{lemma3} in order to obtain $\pi_{k,j}$ for $j = n_k+1,n_k+2,\dots,K$. Finally, we divide all $\pi_{i,j}$ ($(i,j) \in \mathcal{S}$) by $\sum_{(i,j) \in \mathcal{S}} \pi_{i,j}$ in order to get the stationary distribution.
\end{remark}

Let $L$ denote the mean number of jobs in the system. We have:
\begin{align}\label{eq:L}
L =\sum_{(i,j) \in \mathcal{S}} \pi_{i,j} j = \sum_{i=0}^{n_0-1} \pi_{0,j} j + \sum_{i=0}^k \sum_{j=n_i}^K \pi_{i,j} j.
\end{align}

It follows from Little's law that:
\begin{align}\label{eq:W}
W = \frac{{\rm E}[L]}{\lambda (1-P_b)}=\frac{\sum_{i=0}^{n_0-1} \pi_{0,j} j + \sum_{i=0}^k \sum_{j=n_i}^K \pi_{i,j} j}{\lambda (1-\sum_{i=0}^k \pi_{i,K})}.
\end{align}

Therefore, we obtain:
\begin{align}
W_q = W-\frac{1}{\mu}. \label{eq:W_q}
\end{align}

Let $P_b$ denote the blocking probability. We have:
\begin{align}\label{eq:P_b}
P_b = \sum_{i=0}^k \pi_{i,K}.
\end{align}

The mean number of VNF instances is given by:
\begin{align}
S = \sum_{(i,j) \in \mathcal{S}} \pi_{i,j} (n_i-n_0) + \sum_{i=0}^k \sum_{j = n_i}^K \pi_{i,j} \min(j-n_i,N-n_i), \label{eq:S}
\end{align}
where the first term is the number of VNF instances that are active already while the second term is the mean number of VNF instances in setup mode.

\textbf{Summary of the derivation:} In this section, we have developed a mathematical model to derive the metrics $W_q$ and $S$ in the cost function~(\ref{eq:optimal_short}), where $W_q$ and $S$ are shown in Equations~(\ref{eq:W_q}) and  (\ref{eq:S}), respectively.  Given the closed forms, one can easily find the optimal $\tau \in \{k, n_0, \mu, K, \alpha\}$ to balance the cost function $C$ if the rest of parameters and $\lambda$ are given. The reason is that $W_q$ and $S$ are the functions of those parameters. 	
We have:
	\begin{equation}\label{eq:optimal}
	\begin{aligned}
	& \underset{\tau}{\text{arg min}}
	& & C =  w_1W_q+ w_2S\; ,\\
	& \text{subject to}
	& & 0<W_q<W_q'\; .
	\end{aligned}
	\end{equation}
We can find the local maximum/minimum of the cost function at point $\tau$ when $C'=0$ is satisfied. In addition, $C$ at point $\tau$ has the local minimum if $C''>0$. The optimal $\tau$ is then obtained.
In next section, we will use $\tau=k$ as an example to show how to use the derived metrics to decide the optimal $k$ and the number of VNF instances according to their weighting factors. Please note that we can also set $\tau$ as other parameters, such as $ n_0, \mu, K, \alpha$, which can also be easily applied to the algorithm introduced in the next section.

Moreover, other metrics such as  $P_b$, $W$, and $L$ are also given in Equations~(\ref{eq:P_b}), (\ref{eq:W}), and (\ref{eq:L}), which can be used for variants of the cost function (See Appendix).
% Although there are only five parameters in~(\ref{eq:optimal_long}),
The derivations in this section and Appendix are generic models which can be easily extended to any number of metrics. The changes in the number of metrics will not alter our
analysis although they may affect the optimal policies set by operators.

It is also worth to mention that we have solved a system of $n_0 + \sum_{i=0}^k (K-n_i)$ = $O(k K)$ unknown variables. The computational complexity by a conventional method is $O(k^3 K^3)$. It is easy to see that the computational complexity of our recursive algorithm is only $O(k K)$. Furthermore, our algorithm is numerically stable since it manipulates positive numbers.

\subsection{Algorithm for Deciding $k$}
\label{ssec:algo}
\begin{algorithm}[t]
	\caption{Selecting the optimal $k$}
	\begin{algorithmic}[1] \label{algo:opti}
		\renewcommand{\algorithmicrequire}{\textbf{Input:}}
		\REQUIRE $\overline S$, $\overline W_q$, $\delta$, $K$
		\renewcommand{\algorithmicensure}{\textbf{Output:}}
		\ENSURE $k_{op}$
		\STATE Initialize $k$ as 0
		\WHILE{$k \leq K - n_0$}
		\STATE $S^\prime = S / \overline S$
		\STATE $W_q^\prime = W_q / \overline W_q$
		\IF {$S^\prime/W_q^\prime < \delta $}
		\STATE {$k = k + 1$}
		\ELSE
		\STATE {return $k$}
		\ENDIF
		\ENDWHILE
	\end{algorithmic}
\end{algorithm}

Given the analytical model above, one can quickly obtain the operation cost and system performance and design optimal strategies without real deployment to save cost and time. Without our model, it is difficult to obtain the results in a short time. For instance, even in our simplified simulation settings (few arrival rate) in Section~\ref{sec:Simulation_Results}, it is still very time-consuming to get simulation results, e.g., tens of hours per simulation.

We propose Algorithm~\ref{algo:opti} for operators to specify $k$ based on the weighting factors. For ease of understanding, we use two weighting factors and two parameters, i.e., the cost function in~(\ref{eq:optimal_short}).   The Algorithm~\ref{algo:opti} takes input $\overline S$, $\overline W_q$, $\delta=\frac{w_2}{w_1}$  and outputs the optimal value $k_{op}$. We denote $\overline S$ and $\overline W_q$ as the maximum values of $S$ and $W_q$ in the system. Note that $\overline S$ and $ \overline W_q$ are the constraints set by the operators.
Initially, we set $k$ to 0 and  $k$ is bound by $K-n_0$.
As $k$ starts from 0, the ratio of $S^\prime / W_q^\prime$ increases in every loop accordingly. The loop does not stop until it finds the lowest $k$ value for  $k_{op}$.

\begin{figure}[t]
	\centering
	\includegraphics[width=8cm]{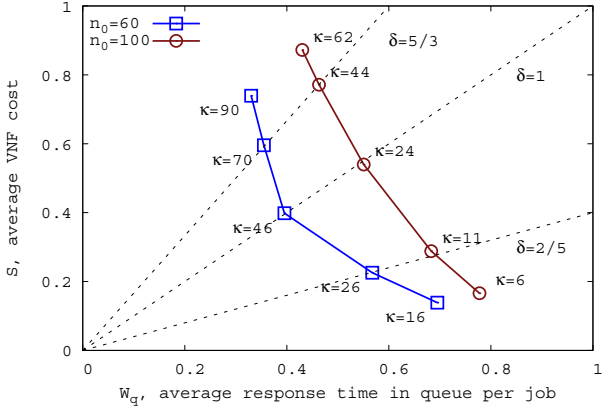}
	\caption{Selection of the optimal $k$ for a given $\delta$.}
	\label{fig:roc_k}
	\vspace{-5mm}
\end{figure}

Fig.~\ref{fig:roc_k} illustrates a graphical plot of $S$ and $W_q$ to demonstrate how to get $k$ for different settings. The three dotted black lines are with different weighting factors ($\delta=\frac{2}{5}, 1, \frac{5}{3}$). Each point in the blue curve is depicted from paired $S$ and $W_q$ associated with different $k$ when $n_0=60$. The blue curve is then plotted with different values of $k$. Similarly, we can plot the brown curve when $n_0=100$. The intersections of the dotted lines and the curves are then the optimal values of $k$ with the chosen parameters. Take the brown curve ($n_0=100$) as an example, the optimal values of $k$ are $44$, $24$, and $11$ for $\delta=\frac{5}{3}$, $\delta=1$, and $\delta=\frac{2}{5}$, respectively. Please note we simply use Fig.~\ref{fig:roc_k} to explain the idea. Operators can use Algorithm~\ref{algo:opti} to get the optimal value of $k$.

\section{Numerical Results} \label{sec:Simulation_Results}

In this section, we show the numerical results. The analytical results in Section~\ref{sec:Proposed_Algorithm} are validated by extensive simulations by using ns-2, version 2.35~\cite{ns2}. In simulation, we use real measurement results for parameter configuration:  $\lambda$ by Facebook data center traffic~\cite{lambda}, $\mu$ by the base service rate of a Amazon EC2 VM~\cite{gilani2015application}, and $\alpha$ by the average VM startup time~\cite{alpha}.  If not further specified, the following parameters are set as the default values for performance comparison: $n_0 = 110$, $\mu = 1$, $\alpha = 0.005$, $K = 250$, $\lambda = 50\thicksim250$ (see Table 1 for details). There were $15\thicksim 750$ million job requests generated during the simulations. Please note those parameters can be replaced by other values. We simply use them to validate our mathematical model and demonstrate the  numerical results.

Figs.~\ref{fig:Impacts_S_expo}-\ref{fig:Impacts_Wq_expo} illustrate both the simulation and analytical results in terms of average VNF cost $S$ and average response time in a queue per job $W_q$, respectively. In the figures, the \textit{lines} denote analytical results and the \textit{points} represent simulation results.  Each simulation result in the figures is the mean value of the results in 300,000 seconds with 95\% confidence level. In the following sections, we show the impacts of $\lambda$, $k$, $K$, $n_0$, $\alpha$ on the performance metrics $S$ and $W_q$, respectively.
\begin{figure*}
	\centering
	\begin{subfigure}[b]{0.48\textwidth}
		\includegraphics[width=8.5cm]{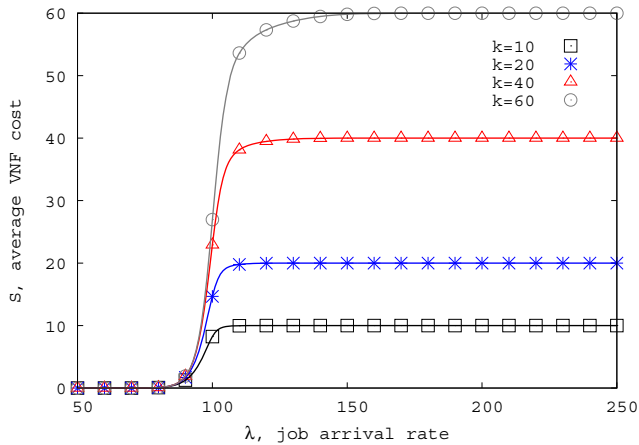}
		\caption{Impacts of $k$ on $S$ ($n_0 = 100$).}
		\label{fig:S_k}
\vspace{5mm}
	\end{subfigure}
	\begin{subfigure}[b]{0.48\textwidth}
		\includegraphics[width=8.5cm]{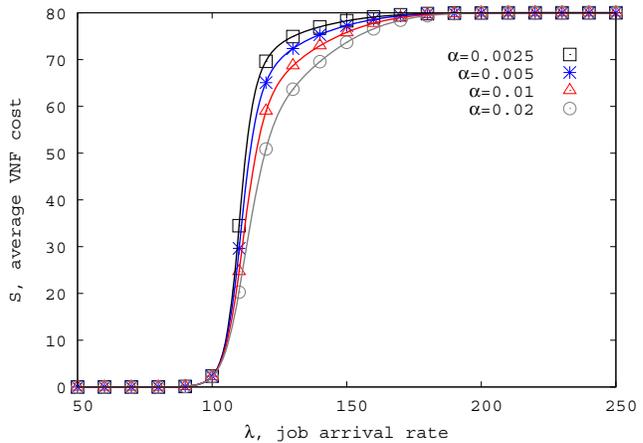}
		\caption{Impacts of $\alpha$ on $S$ ($k = 80$).}
		\label{fig:S_alpha}
\vspace{5mm}
	\end{subfigure}
	\begin{subfigure}[b]{0.48\textwidth}
		\includegraphics[width=8.5cm]{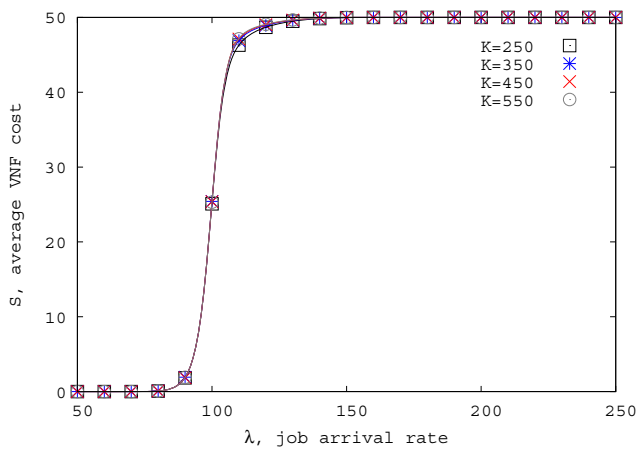}
		\caption{Impacts of $K$ on $S$ ($k = 50$).}
		\label{fig:S_K}
	\end{subfigure}
	\begin{subfigure}[b]{0.48\textwidth}
		\includegraphics[width=8.5cm]{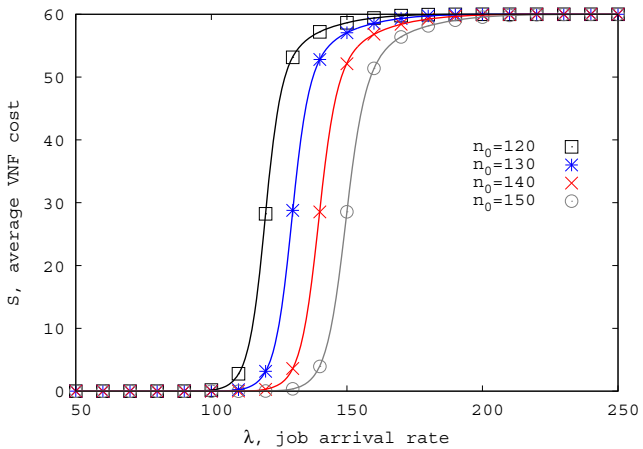}
		\caption{Impacts of $n_0$ on $S$ ($k = 60$).}
		\label{fig:S_n0}
	\end{subfigure}
	\caption{Impacts on $S$ while $1/\mu$, $1/\lambda$, and $1/\alpha$ are exponential distribution.}
	\label{fig:Impacts_S_expo}
\end{figure*}

\begin{figure*}
	\centering
	\begin{subfigure}[b]{0.48\textwidth}
		\includegraphics[width=8.5cm]{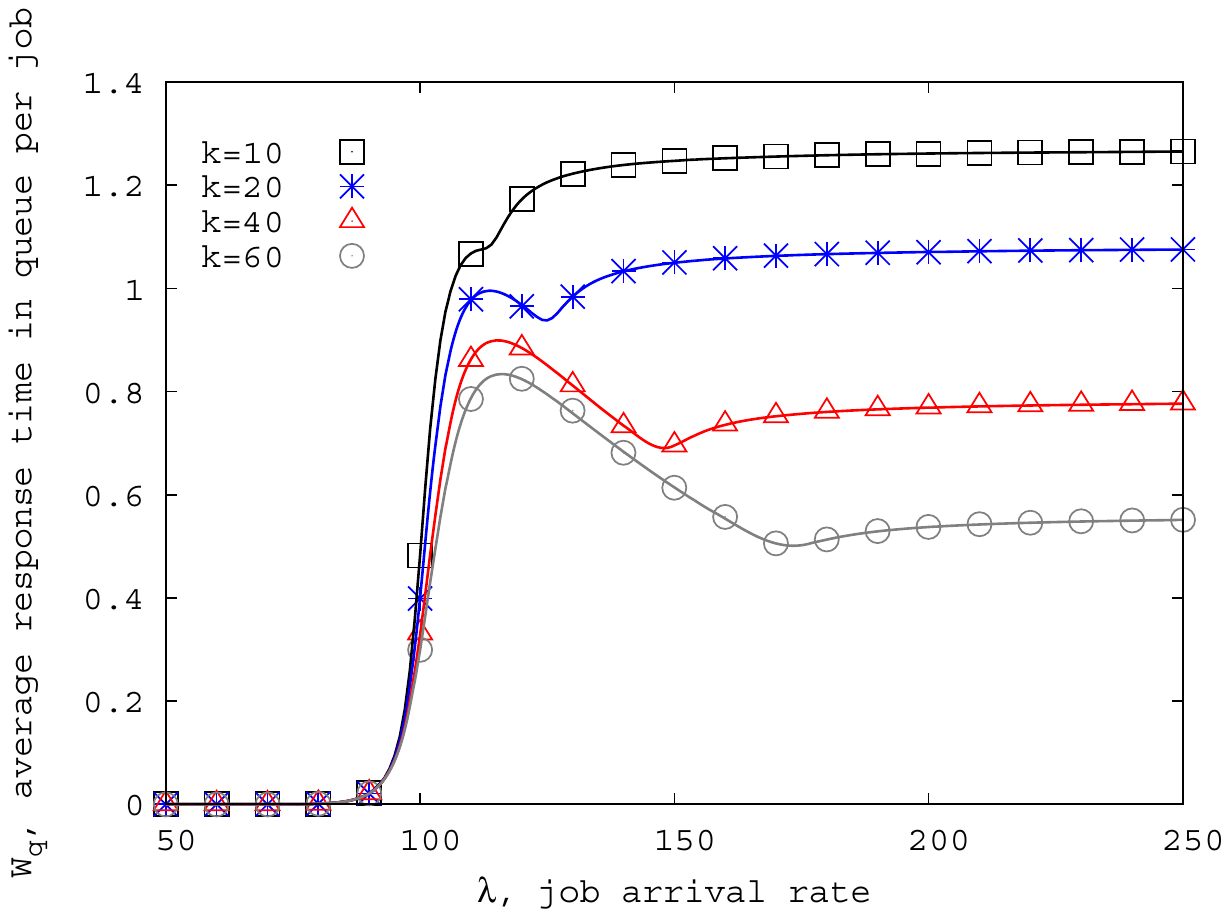}
		\caption{Impacts of $k$ on $W_q$ ($n_0 = 100$).}
		\label{fig:Wq_k}
\vspace{4mm}
	\end{subfigure}
	\begin{subfigure}[b]{0.48\textwidth}
		\includegraphics[width=8.5cm]{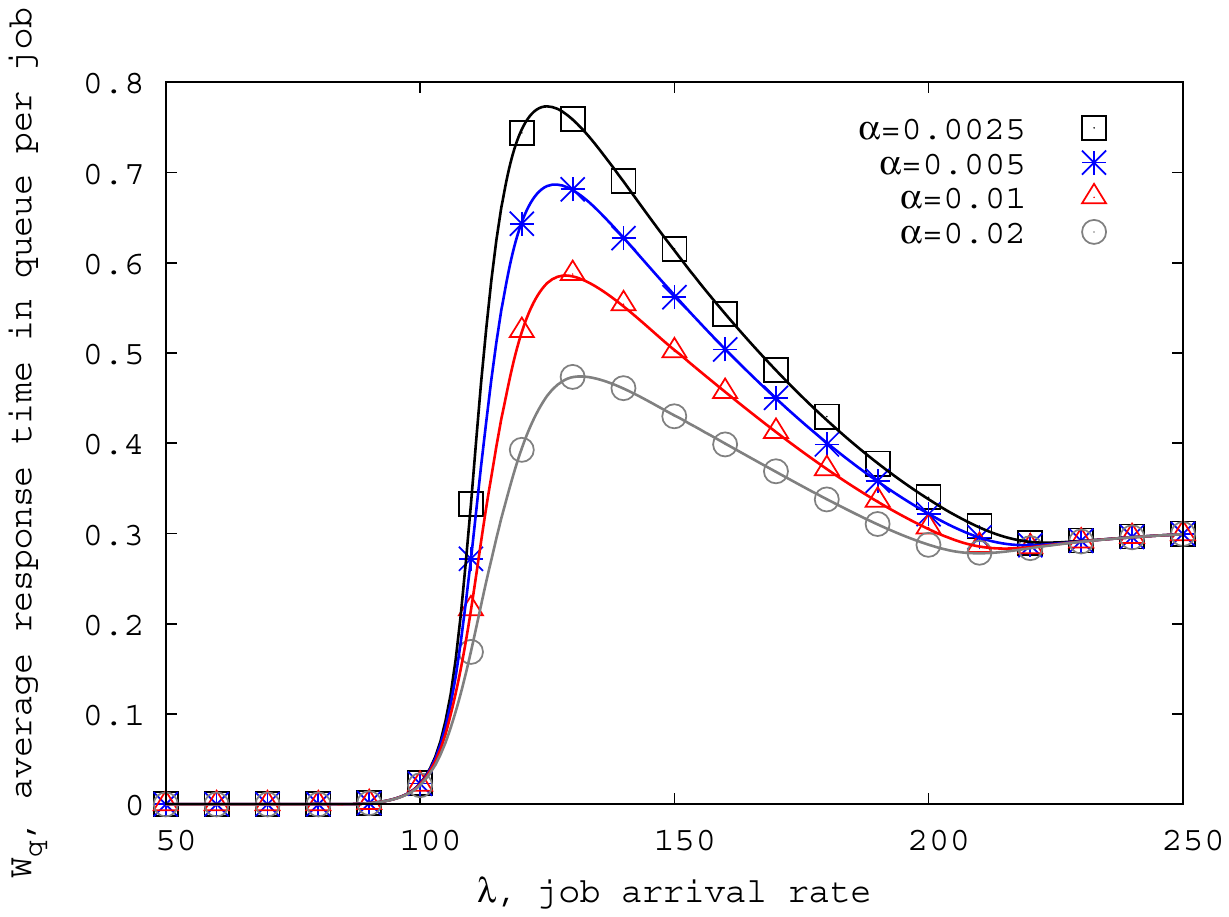}
		\caption{Impacts of $\alpha$ on $W_q$ ($k = 80$).}
		\label{fig:Wq_alpha}
\vspace{4mm}
	\end{subfigure}
	\begin{subfigure}[b]{0.48\textwidth}
		\includegraphics[width=8.5cm]{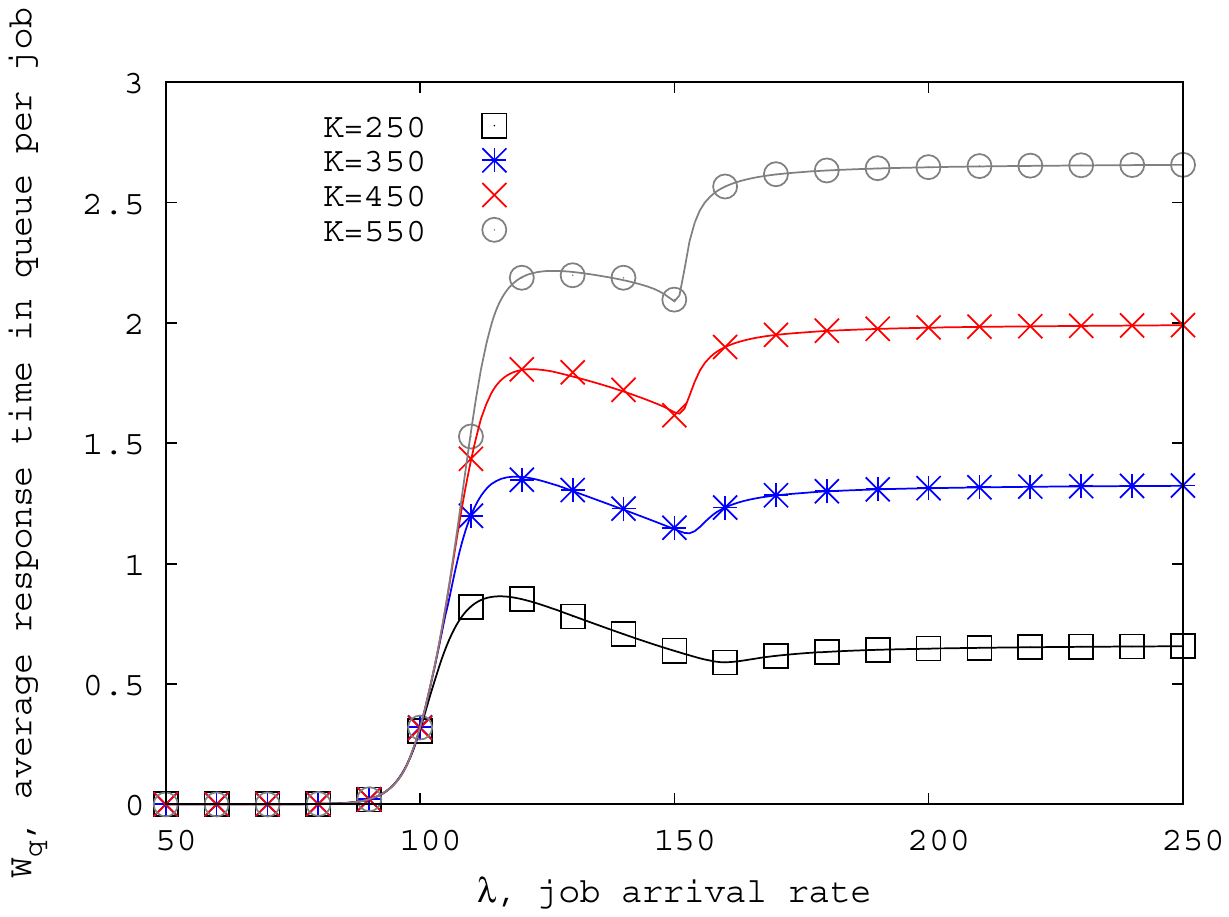}
		\caption{Impacts of $K$ on $W_q$ ($k = 50$).}
		\label{fig:Wq_K}
	\end{subfigure}
	\begin{subfigure}[b]{0.48\textwidth}
		\includegraphics[width=8.5cm]{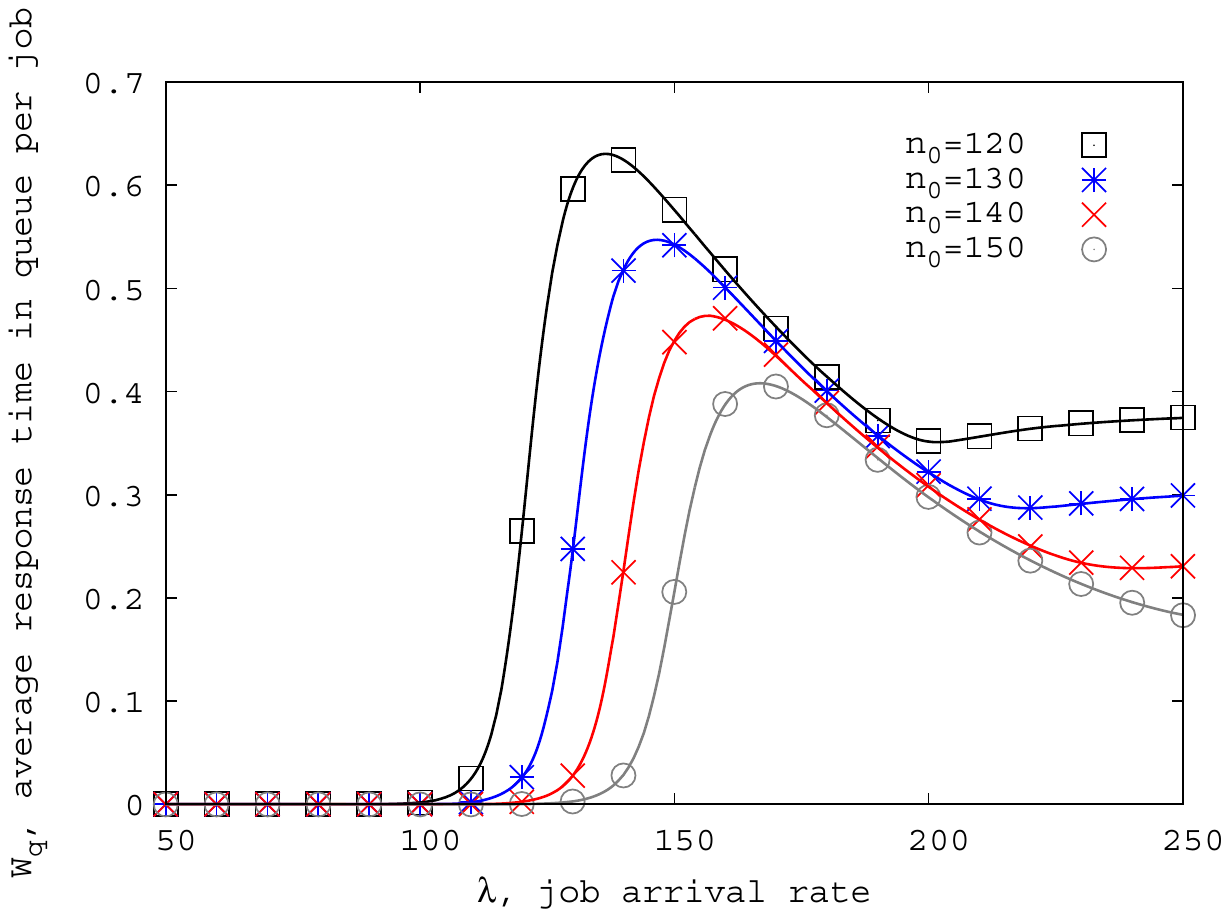}
		\caption{Impacts of $n_0$ on $W_q$ ($k = 60$).}
		\label{fig:Wq_n0}
	\end{subfigure}
	\caption{Impacts on $W_q$ while $1/\mu$, $1/\lambda$, and $1/\alpha$ are exponential distribution.}
	\label{fig:Impacts_Wq_expo}
\end{figure*}

\subsection{Impacts of arrival rate, $\lambda$}\label{ssec:lambda}
Figs.~\ref{fig:S_k}-\ref{fig:S_n0} show the impacts of $\lambda$ on $S$. Generally, one can see that $S$ is 0 at the beginning. It then grows sharply, and later raises smoothly and reaches at a bound when $\lambda$ increases. The reasons are as follows. When $\lambda \ll n_0\mu$, the incoming jobs are handled by the legacy equipment. No VNFs are turned on. Later,  VNFs are turned on as $\lambda$ approaches to $(n_0+k)\mu$. Accordingly, the server cost $S$ increases as $\lambda$ grows. When $\lambda>(n_0+k)\mu$, $S$ stops growing. Because all of the available $k$ VNFs are turned on, $S$ is bounded as $k$ VNFs' costs.

Figs.~\ref{fig:Wq_k}-\ref{fig:Wq_n0} illustrate the impacts of $\lambda$ on $W_q$. Interestingly, the trend of the curves can generally be divided into three phases\footnote{In Figs.~\ref{fig:Wq_alpha} and \ref{fig:Wq_n0}, only two phases are displayed due to the range of $\lambda$. Given a larger $\lambda$, all of the three phases will appear.}: ascent phase, descent phase, and saturation phase. In the first phase, $W_q$ grows sharply due to the setup time of VNFs. Specifically, when $\lambda \ll n_0\mu$, $W_q$ is almost 0 because all jobs are handled by legacy equipment. As $\lambda$ approaches to $n_0\mu$ and then is larger than $n_0\mu$, VNFs start to be turned on. In this phase, however, $W_q$ still raises due to the setup time of VNFs.  The reason is that VNFs just start to be turned on and do not reach their full capacities. In the second phase, we can see that $W_q$ starts to descend because the VNFs start serving jobs.
In the third phase, however, $W_q$ starts to ascend again and then saturate at a bound. The reason of ascent is that when $\lambda \geq (n_0+k)\mu$, the system is not able to handle the coming jobs. Finally, the curves go to saturation because the capacity of the system is too full to handle the jobs and the value of $W_q$ is limited by $K$.

\subsection{Impacts of the number of VNF instances, $k$}
Fig.~\ref{fig:S_k} shows the impacts of $k$ on $S$. All of the four curves increase and then reach a bound when $\lambda$ increase. A larger $k$ leads to a  bigger gap between the initial point and the bound.  When $\lambda$ increases, a larger $k$ means that more VNFs can be used to handle the growing job requests. Therefore, $S$ increases accordingly. If an operator wants to bound VNF budget to $S$, the operator can specify a suitable $k$ based on (\ref{eq:S}).

Fig.~\ref{fig:Wq_k} illustrates the impacts of $k$ on $W_q$. The impacts of $k$ are shown as the length of the second phase as discussed in Sec.~\ref{ssec:lambda}. The length of the second phase prolongs when $k$ increases. A larger $k$ gives the system more capability to handle the riseing job requests. That is, it delays the time that the system capacity reaches its bound. If an operator wants to bound the response time to $W_q$ of job request, the operator can choose a suitable $k$ based on (\ref{eq:W_q}).

\subsection{Impacts of VNF setup rate, $\alpha$}
Recall that $\alpha$ is the setup rate of VNFs. To change setup rate, one can adjust resources (e.g., CPU, memory) for VNFs. Fig.~\ref{fig:S_alpha} shows the impacts of $\alpha$ on $S$.  The impacts of $\alpha$ are shown as the slope of the curves. A larger $\alpha$ means smaller slope, but $\alpha$ has no effects at the beginning and the end of the curves. The reasons are as follows. A larger $\alpha$ means smaller VNF setup time. A smaller VNF setup time helps VNF to be turned on and to handle jobs faster so that the system is more efficient than that with larger VNF setup time.

Fig.~\ref{fig:Wq_alpha} illustrates the impacts of $\alpha$ on $W_q$. Again, the impacts of $\alpha$ are shown as the slope of curves. A larger $\alpha$ leads to smaller slopes. Also, $\alpha$ decides the maximum value of $W_q$. The reason is that smaller setup time enables VNF to handle jobs faster.

\subsection{Impacts of system capacity, $K$}
Fig.~\ref{fig:S_K} and Fig.~\ref{fig:Wq_K} depict the impacts of $K$ on $S$ and $W_q$, respectively. Based on our observation on Fig.~\ref{fig:S_K}, $K$ has limited impacts on $S$. As discussed in Sec.~\ref{ssec:lambda}, $S$ is mainly decided by $k$. As shown in Fig.~\ref{fig:Wq_K}, the impacts are significant on $W_q$. Different $K$ makes huge gaps between the curves. The curves also has three phases.  A larger $K$ leads to a larger $W_q$. The reason is that it enables more jobs waiting in the queue rather than dropping them.

\subsection{Impacts of legacy equipment capacity, $n_0$}
\label{ssec:n0}
Fig.~\ref{fig:S_n0} and Fig.~\ref{fig:Wq_n0} illustrate the impacts of $n_0$ on $S$ and $W_q$, respectively. We observe that the curves initiate at 0, and then fix at 0 for a period and start to grow when $\lambda$ increases. The value of $n_0$ decides the length of the period when the curves start to grow. The reason is that the legacy equipment can handle jobs within its capacity. When $\lambda$ exceeds the capacity of the legacy equipment, both $S$ and $W_q$ start to increase.

\subsection{Summary of Sections~\ref{ssec:lambda}--\ref{ssec:n0}}
Overall, Figs.~\ref{fig:Impacts_S_expo}-\ref{fig:Impacts_Wq_expo} not only demonstrate the correctness of our analytical model, but also show the impacts  of $\lambda$, $k$, $K$, $\mu$, $n_0$, $\alpha$ on the performance metrics $S$ and $W_q$.   Moreover, although service time $1/\mu$ is assumed to be exponential distributed, the proposed analytical model is also compatible for service time with deterministic, normal, uniform, Erlang, Gamma distribution. Due to page limit, more simulation results in terms of service time with various distributions are given in~\cite{YiRen2016Globecom}. Accordingly, the analytical model enables wide applicability in various scenarios. With our model, operators can quantify the performance easily. Therefore, it has important theoretical significance.

\subsection{Relation between $C$ and $k$}
In Figs.~\ref{fig:Impacts_S_expo}-\ref{fig:Impacts_Wq_expo}, we have demonstrated the correctness of our model. Now, we show how to obtain the optimal $k$ in various situations using the proposed model.

Figs.~\ref{fig:Impacts_op}(a) and (b) illustrate the results of two metrics in y-axis.  The left y-axis in  blue color is the cost function $C$ specified in (\ref{eq:optimal}), which is corresponding to the blue curved line in the figure. The right y-axis in red color is the average response time in the queue, $W_q$, which is corresponding to the dotted curved line in red color. The x-axis is the number of VNF instances, $k$.

Let us first take a look at the red dotted line in Fig.~\ref{fig:Impacts_op}(a). It is shown that $W_q$ decreases when $k$ grows. Increasing $k$ will make arriving jobs experience lower waiting time and better QoS, and decrease SLA violation. Specifically, we observe that $W_q$ first declines sharply, and then at a point (around $k=28$) it starts to decrease slowly. At the same point, however, the cost function $C$ shown in blue curved line starts to grow sharply, leading to higher cost. The optimal $k$ is 28 when $\lambda=130$ (job/s). By applying the optimal $k$, the corresponding $W_q$ is 1.17 s (see the green dotted straight line). If a mobile operator defines a latency greater than 1.17~s as SLA violation, that is, $W_q'$ in (\ref{eq:optimal}) is set to be greater than 1.17~s, we can just set $k$ as 28 to perfectly balance $C$ and fulfill the latency requirement.  Otherwise, we can use the green dotted line to find the $k$ which is corresponding to the latency less than the requirement set by the operator. In this case, although the $C$ is not the minimum, the $k$ is the optimal value that can satisfy the latency requirement. Similarly, one can find another example of setting the optimal $k$ with $\lambda=170$ in Fig.~\ref{fig:Impacts_op}(b).

Here, we only demonstrate that any latency requirement can be met by setting the optimal value $k$ by using (\ref{eq:optimal}). Please note that other parameters, such as $n_o$, $\mu$, $K$, and $\alpha$ in $\tau$ can be easily applied to (\ref{eq:optimal}) or (\ref{eq:optimal_xxx})  using the same method. Due to space limit, we do not explain it in this paper.

\begin{figure*}[t]
	\centering
	\begin{subfigure}[b]{0.48\textwidth}
		\includegraphics[width=8.5cm]{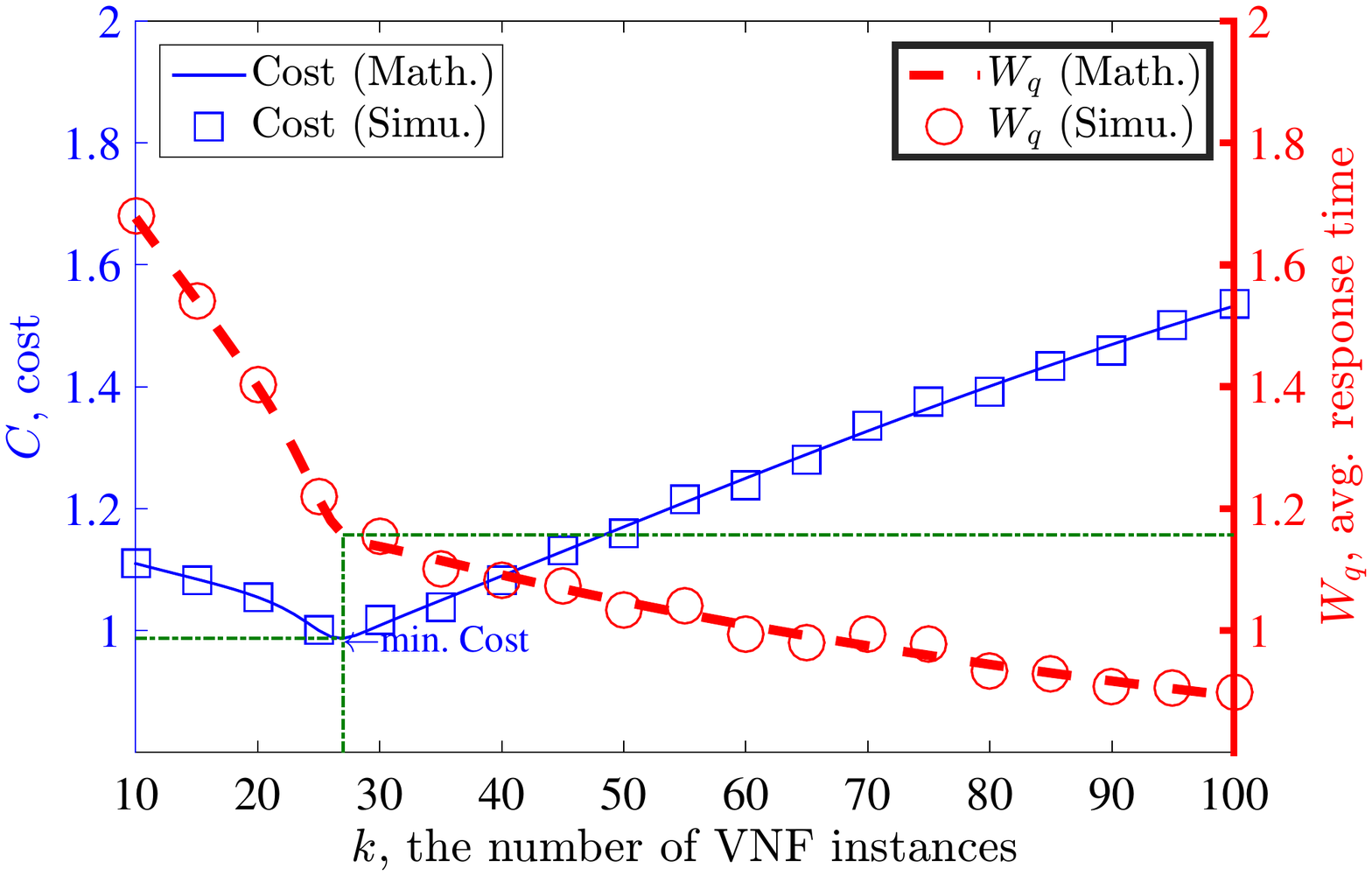}
		\caption{$\lambda=130$}
		\label{fig:Impact_lambda}
	\end{subfigure}
	\begin{subfigure}[b]{0.48\textwidth}
		\includegraphics[width=8.5cm]{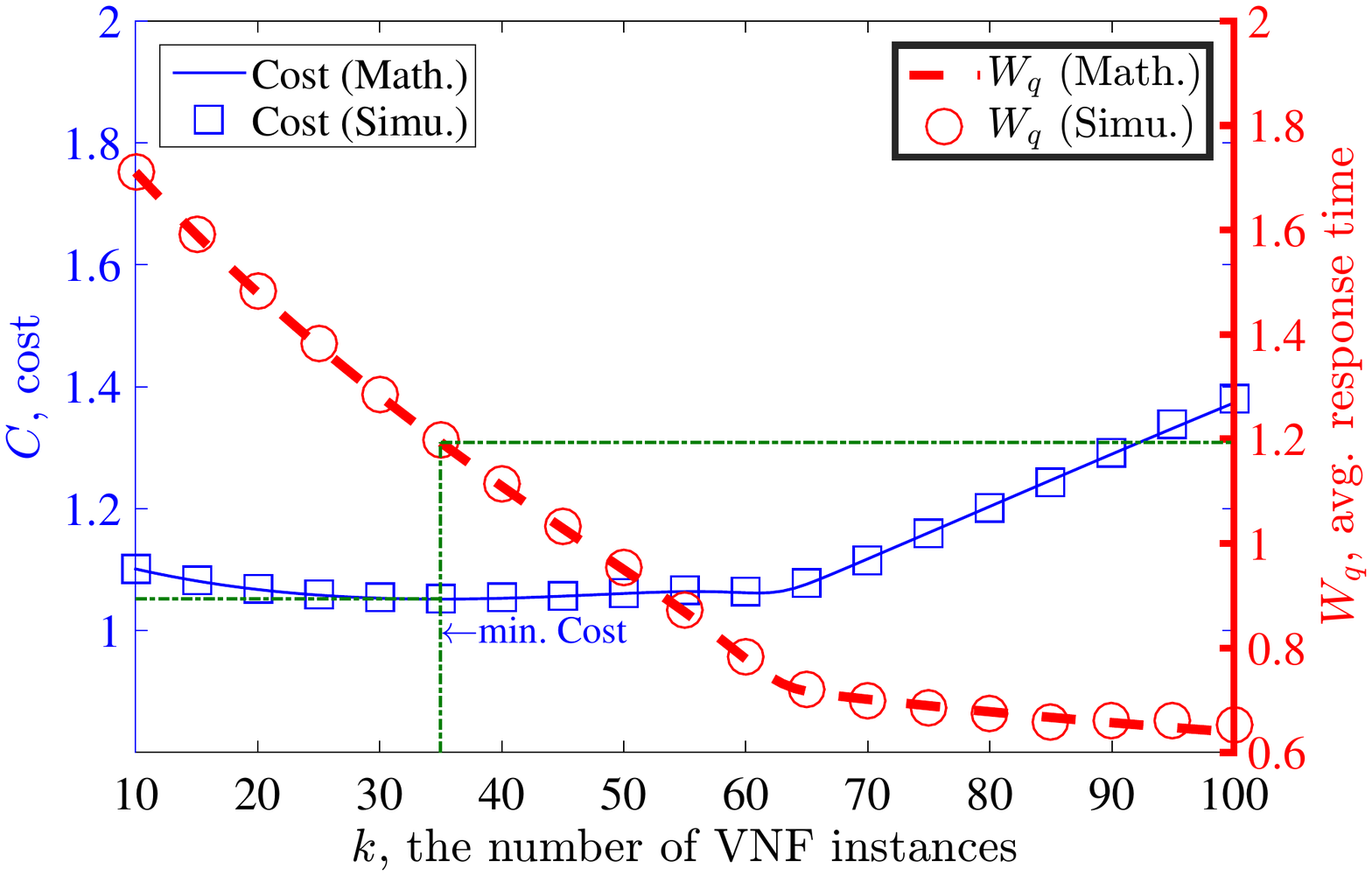}
		\caption{$\lambda=170$}
		\label{fig:Impact_K}
	\end{subfigure}
	\caption{Optimal $k$ in various conditions.}
	\label{fig:Impacts_op}
\end{figure*}

%
%\begin{figure}[t]
%	\centering
%	\includegraphics[width=8.7cm]{}
%	\caption{Impact of $\lambda$.}
%	\label{fig:}
%	\vspace{-3mm}
%\end{figure}
%
%\begin{figure}[t]
%	\centering
%	\includegraphics[width=8.7cm]{}
%	\caption{Impact of $\mu$.}
%	\label{fig:}
%	\vspace{-3mm}
%\end{figure}
%
%\begin{figure}[t]
%	\centering
%	\includegraphics[width=8.7cm]{}
%	\caption{Impact of $K$.}
%	\label{fig:}
%	\vspace{-3mm}
%\end{figure}
%
%\begin{figure}[t]
%	\centering
%	\includegraphics[width=8.7cm]{}
%	\caption{Impact of $n_0$.}
%	\label{fig:}
%	\vspace{-3mm}
%\end{figure}
%
%\begin{figure}[t]
%	\centering
%	\includegraphics[width=8.7cm]{}
%	\caption{Impact of $\alpha$.}
%	\label{fig:}
%	\vspace{-3mm}
%\end{figure}

\section{Conclusions} \label{sec:Conclusions}
In this paper, we propose DASA to address the tradeoff between performance and operation cost. We develop analytical and simulation models to study the average job response time $W_q$ and operation cost $S$. The results quantify the performance metrics. Our model fills the research gap by taking both  VNF setup time and  the capacity of legacy equipment into consideration in vEPC. Besides, we propose a novel recursive algorithm to reduce the computational complexity significantly. Our study provides guidelines for mobile operators to bound their operation cost and configure job response time in a systematic way. Based on our performance study, operators can design optimization strategies to quickly obtain the operation cost and system performance without real deployment to save cost and time.

For future work, one extension is to generalize the VNF setup time, and the arrival time and service time. Right now, there is no literature to support that when they are exponential random variables. These results could be generalized by Markovian Arrival Processes~\cite{pender2016approximations} or approximated by using orthogonal polynomial approaches~\cite{pender2014gram}. Also, we plan to relax the assumption of VNF scaling in/out capability, that is, from one VNF instance per time to arbitrary numbers of instances per time. Moreover, another extension is to consider impatient customer. That is, waiting jobs have time-limit. After that, they will be discarded from the system.

\section*{Acknowledgments}
We especially thank Zheng-Wei Yu and Yi-Hao Lin for their help in our simulations.

\bibliographystyle{IEEEtran}
\bibliography{IEEEACM,NFV}

% Generated by IEEEtran.bst, version: 1.14 (2015/08/26)
\begin{thebibliography}{10}
\providecommand{\url}[1]{#1}
\csname url@samestyle\endcsname
\providecommand{\newblock}{\relax}
\providecommand{\bibinfo}[2]{#2}
\providecommand{\BIBentrySTDinterwordspacing}{\spaceskip=0pt\relax}
\providecommand{\BIBentryALTinterwordstretchfactor}{4}
\providecommand{\BIBentryALTinterwordspacing}{\spaceskip=\fontdimen2\font plus
\BIBentryALTinterwordstretchfactor\fontdimen3\font minus
  \fontdimen4\font\relax}
\providecommand{\BIBforeignlanguage}[2]{{%
\expandafter\ifx\csname l@#1\endcsname\relax
\typeout{** WARNING: IEEEtran.bst: No hyphenation pattern has been}%
\typeout{** loaded for the language `#1'. Using the pattern for}%
\typeout{** the default language instead.}%
\else
\language=\csname l@#1\endcsname
\fi
#2}}
\providecommand{\BIBdecl}{\relax}
\BIBdecl

\bibitem{Ericsson2015}
Ericsson, ``Mobility report on the pulse of the network society,'' Ericsson,
  Tech. Rep., Nov. 2015.

\bibitem{ETSIGSNFVINF2015}
{ETSI GS NFV-INF 001 v.1.1.1}, \emph{Network {Functions Virtualisation (NFV)};
  Infrastructure Overview}, ETSI Std., Jan. 2015.

\bibitem{hawilo2014nfv}
H.~Hawilo, A.~Shami, M.~Mirahmadi, and R.~Asal, ``{NFV: state of the art,
  challenges, and implementation in next generation mobile networks (vEPC)},''
  \emph{IEEE Network}, vol.~28, no.~6, pp. 18--26, Nov./Dec. 2014.

\bibitem{Animoto}
\BIBentryALTinterwordspacing
Animoto's facebook scale-up. [Online]. Available:
  \url{http://blog.rightscale.com/2008/04/23/animoto-facebook-scale-up/.}
\BIBentrySTDinterwordspacing

\bibitem{xiao2013dynamic}
Z.~Xiao, W.~Song, and Q.~Chen, ``Dynamic resource allocation using virtual
  machines for cloud computing environment,'' \emph{IEEE Trans. Parallel and
  Distributed Systems}, vol.~24, no.~6, pp. 1107--1117, 2013.

\bibitem{jokhio2013prediction}
F.~Jokhio, A.~Ashraf, S.~Lafond, I.~Porres, and J.~Lilius, ``Prediction-based
  dynamic resource allocation for video transcoding in cloud computing,'' in
  \emph{Proc. 21st Euromicro Int'l Conf. Parallel, Distributed and
  Network-Based Processing (PDP)}, 2013, pp. 254--261.

\bibitem{roy2011efficient}
N.~Roy, A.~Dubey, and A.~Gokhale, ``Efficient autoscaling in the cloud using
  predictive models for workload forecasting,'' in \emph{Proc. IEEE Int'l Conf.
  Cloud Computing (CLOUD)}, 2011, pp. 500--507.

\bibitem{tirado2011predictive}
J.~M. Tirado, D.~Higuero, F.~Isaila, and J.~Carretero, ``Predictive data
  grouping and placement for cloud-based elastic server infrastructures,'' in
  \emph{Proc. 11th IEEE/ACM Int'l Symp. Cluster, Cloud and Grid Computing
  (CCGrid)}, 2011, pp. 285--294.

\bibitem{niu2012quality}
D.~Niu, H.~Xu, B.~Li, and S.~Zhao, ``Quality-assured cloud bandwidth
  auto-scaling for video-on-demand applications,'' in \emph{{Proc. IEEE
  INFOCOM}}, 2012, pp. 460--468.

\bibitem{huang2013migration}
Q.~Huang, S.~Su, S.~Xu, J.~Li, P.~Xu, and K.~Shuang, ``Migration-based elastic
  consolidation scheduling in cloud data center,'' in \emph{Proc. IEEE 33rd
  Int'l Conf. Distributed Computing Systems Workshops (ICDCSW)}, 2013, pp.
  93--97.

\bibitem{huang2014prediction}
Q.~Huang, K.~Shuang, P.~Xu, J.~Li, X.~Liu, and S.~Su, ``Prediction-based
  dynamic resource scheduling for virtualized cloud systems,'' \emph{Journal of
  Networks}, vol.~9, no.~2, pp. 375--383, 2014.

\bibitem{calheiros2014workload}
R.~Calheiros, E.~Masoumi, R.~Ranjan, and R.~Buyya, ``Workload prediction using
  {ARIMA} model and its impact on cloud applications' {QoS},'' \emph{IEEE
  Trans. Cloud Computing}, vol.~3, no.~4, pp. 449 -- 458, Aug. 2014.

\bibitem{islam2012empirical}
S.~Islam, J.~Keung, K.~Lee, and A.~Liu, ``Empirical prediction models for
  adaptive resource provisioning in the cloud,'' \emph{Future Generation
  Computer Systems}, vol.~28, no.~1, pp. 155--162, 2012.

\bibitem{bashar2013autonomic}
A.~Bashar, ``Autonomic scaling of cloud computing resources using {BN}-based
  prediction models,'' in \emph{Proc. IEEE 2nd Int'l Conf. Cloud Networking
  (CloudNet)}, 2013, pp. 200--204.

\bibitem{bankole2013cloud}
A.~A. Bankole and S.~A. Ajila, ``Cloud client prediction models for cloud
  resource provisioning in a multitier web application environment,'' in
  \emph{Proc. IEEE 7th Int'l Symp. Service Oriented System Engineering (SOSE)},
  2013, pp. 156--161.

\bibitem{shen2011cloudscale}
Z.~Shen, S.~Subbiah, X.~Gu, and J.~Wilkes, ``Cloudscale: elastic resource
  scaling for multi-tenant cloud systems,'' in \emph{Proc. 2nd ACM Symposium on
  Cloud Computing}, 2011, p.~5.

\bibitem{khan2012workload}
A.~Khan, X.~Yan, S.~Tao, and N.~Anerousis, ``Workload characterization and
  prediction in the cloud: A multiple time series approach,'' in \emph{Proc.
  IEEE NOMS}, 2012, pp. 1287--1294.

\bibitem{hill2010early}
Z.~Hill, J.~Li, M.~Mao, A.~Ruiz-Alvarez, and M.~Humphrey, ``Early observations
  on the performance of {Windows Azure},'' in \emph{Proc. 19th ACM HPDC}, Jun.
  2010, pp. 367--376.

\bibitem{mao2010cloud}
M.~Mao, J.~Li, and M.~Humphrey, ``Cloud auto-scaling with deadline and budget
  constraints,'' in \emph{Proc. IEEE/ACM Int'l Conf. on Grid Computing (Grid)},
  2010, pp. 41--48.

\bibitem{chen2004ip}
J.-C. Chen and T.~Zhang, \emph{IP-based next-generation wireless networks:
  systems, architectures, and protocols}.\hskip 1em plus 0.5em minus
  0.4em\relax John Wiley \& Sons, 2004.

\bibitem{TS23.002.2016}
{3GPP TS 23.002 V13.6.0}, \emph{Network architecture (Release 13)}, 3GPP Std.,
  Jun. 2016.

\bibitem{V13.1.02015}
{3GPP TR 32.842 V13.1.0}, ``Telecommunication management; {S}tudy on network
  management of virtualized networks ({R}elease 13),'' Tech. Rep., Dec. 2015.

\bibitem{mitrani2013managing}
I.~Mitrani, ``Managing performance and power consumption in a server farm,''
  \emph{Annals of Operations Research}, vol. 202, no.~1, pp. 121--134, 2013.

\bibitem{Mitrani20111222}
{I. Mitrani}, ``Service center trade-offs between customer impatience and power
  consumption,'' \emph{Elsevier Performance Evaluation}, vol.~68, no.~11, pp.
  1222 -- 1231, Nov. 2011.

\bibitem{mitrani2013trading}
I.~Mitrani, ``Trading power consumption against performance by reserving blocks
  of servers,'' \emph{Springer Computer Performance Engineering}, vol. LNCS
  7587, pp. 1--15, 2013.

\bibitem{hu2015power}
J.~Hu and T.~Phung-Duc, ``Power consumption analysis for data centers with
  independent setup times and threshold controls,'' in \emph{Proc. Int'l Conf.
  Numerical Analysis And Applied Mathematics (ICNAAM'14)}.

\bibitem{phung2015multiserver}
T.~Phung-Duc, ``Multiserver queues with finite capacity and setup time,'' in
  \emph{Proc. 22nd. Int'l Conf. Analytical and Stochastic Modelling Techniques
  and Applications, ASMTA'15}, May 2015, pp. 173--187.

\bibitem{ns2}
{"The network simulator - ns-2." Available: http://www.isi.edu/nsnam/ns/.}

\bibitem{lambda}
A.~Roy, H.~Zeng, J.~Bagga, G.~Porter, and A.~C. Snoeren, ``Inside the social
  network's (datacenter) network,'' in \emph{Proc. ACM SIGCOMM}, 2015.

\bibitem{gilani2015application}
M.~Gilani, C.~Inibhunu, and Q.~H. Mahmoud, ``Application and network
  performance of {Amazon} elastic compute cloud instances,'' in \emph{Proc.
  IEEE 4th Int'l Conf. Cloud Networking (CloudNet)}, 2015, pp. 315--318.

\bibitem{alpha}
M.~Mao and M.~Humphrey, ``A performance study on the {VM} startup time in the
  cloud,'' in \emph{Proc. IEEE 5th Int'l Conf. Cloud Computing (CLOUD)}, 2012,
  pp. 423--430.

\bibitem{YiRen2016Globecom}
\BIBentryALTinterwordspacing
Y.~Ren, T.~Phung-Duc, and J.-C. Chen, ``Design and analysis of {Dynamic Auto
  Scaling Algorithm (DASA) for 5G} mobile networks,'' National Chiao Tung
  University, Tech. Rep., 2017. [Online]. Available:
  \url{https://arxiv.org/abs/1604.05803}
\BIBentrySTDinterwordspacing

\bibitem{pender2016approximations}
J.~Pender and Y.~M. Ko, ``Approximations for the queue length distributions of
  time-varying many-server queues,'' Cornell, Tech. Rep., 2016.

\bibitem{pender2014gram}
J.~Pender, ``Gram charlier expansion for time varying multiserver queues with
  abandonment,'' \emph{SIAM Journal on Applied Mathematics}, vol.~74, no.~4,
  pp. 1238--1265, 2014.

\end{thebibliography}

\appendices

%% use section* for acknowledgment
%\ifCLASSOPTIONcompsoc
%  % The Computer Society usually uses the plural form
%  \section*{Acknowledgments}
%\else
%  % regular IEEE prefers the singular form
%  \section*{Acknowledgment}
%\fi
%
%
%The authors would like to thank...

% Can use something like this to put references on a page
% by themselves when using endfloat and the captionsoff option.
\ifCLASSOPTIONcaptionsoff
  \newpage
\fi

% trigger a \newpage just before the given reference
% number - used to balance the columns on the last page
% adjust value as needed - may need to be readjusted if
% the document is modified later
%\IEEEtriggeratref{8}
% The "triggered" command can be changed if desired:
%\IEEEtriggercmd{\enlargethispage{-5in}}

% references section

% can use a bibliography generated by BibTeX as a .bbl file
% BibTeX documentation can be easily obtained at:
% http://mirror.ctan.org/biblio/bibtex/contrib/doc/
% The IEEEtran BibTeX style support page is at:
% http://www.michaelshell.org/tex/ieeetran/bibtex/
%\bibliographystyle{IEEEtran}
% argument is your BibTeX string definitions and bibliography database(s)
%\bibliography{IEEEabrv,../bib/paper}
%
% <OR> manually copy in the resultant .bbl file
% set second argument of \begin to the number of references
% (used to reserve space for the reference number labels box)

% biography section
%
% If you have an EPS/PDF photo (graphicx package needed) extra braces are
% needed around the contents of the optional argument to biography to prevent
% the LaTeX parser from getting confused when it sees the complicated
% \includegraphics command within an optional argument. (You could create
% your own custom macro containing the \includegraphics command to make things
% simpler here.)
%\begin{IEEEbiography}[{\includegraphics[width=1in,height=1.25in,clip,keepaspectratio]{mshell}}]{Michael Shell}
% or if you just want to reserve a space for a photo:

\vspace{10mm}
\begin{IEEEbiography}[{\includegraphics[width=1in,height=1.25in,clip,keepaspectratio]{./Figures/YiRenBio}}]{Yi Ren} (S'08-M'13)
	has been an Assistant Researcher at National Chiao Tung University (NCTU), Taiwan since 2012. He received his Ph.D. in Information Communication and Technology from the University of Agder (UiA), Norway in April 2012. His current research interests include security and performance analysis in wireless sensor networks, ad hoc, and mesh networks, LTE, smart grid, and e-health security. He received the Best Paper Award in IEEE MDM 2012.
	\vspace{-10mm}
\end{IEEEbiography}

\vspace{10mm}
\begin{IEEEbiography}[{\includegraphics[width=1in,height=1.25in,clip,keepaspectratio]{./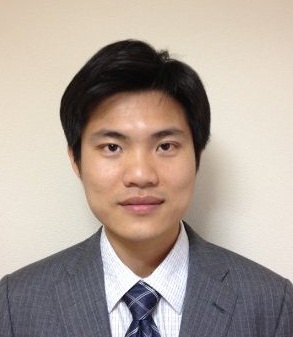}}]{Tuan Phung-Duc}  is an Assistant Professor at Faculty of Engineering, Information and Systems, University of Tsukuba. He received a Ph.D. in Informatics from Kyoto University in 2011. He is in the Editorial Board of the KSII Transactions on Internet and Information Systems and two other  international journals. He served a Guest Editor of the special issue of Annals of Operations Research on Retrial Queues and Related Models and currently is serving as a Guest Editor of the Special Issue of the same journal on Queueing Theory and Network Applications. He was the Chairman of 10th International Workshop on Retrial Queues (WRQ'2014) and the TPC co-chair of 23rd International Conference on Analytical and Stochastic Modelling Techniques and Applications (ASMTA'16). Dr. Phung-Duc received the Research Encourage Award from The Operations Research Society of Japan in 2013. His research interests include Applied Probability, Stochastic Models and their Applications in Performance Analysis of Telecommunication and Service Systems.
	\vspace{-10mm}
\end{IEEEbiography}

\begin{IEEEbiography}[{\includegraphics[width=1in,height=1.25in,clip,keepaspectratio]{./Figures/Chen-Jyh-Cheng}}]{Jyh-Cheng
		Chen} (S'96-M'99-SM'04-F'12) received the Ph.D. degree from the State
	University of New York at Buffalo, USA, in 1998.
	
	He was a Research Scientist with Bellcore/Telcordia Technologies,
	Morristown, NJ, USA, from 1998 to 2001, and a Senior Scientist with
	Telcordia Technologies, Piscataway, NJ, USA, from 2008 to 2010. He was
	with the Department of Computer Science, National Tsing Hua University
	(NTHU), Hsinchu, Taiwan, as an Assistant Professor, an Associate
	Professor, and a Professor from 2001 to 2008. He was also the Director
	of the Institute of Network Engineering with National Chiao Tung
	University (NCTU), Hsinchu, from 2011 to 2014. He has been a Faculty
	Member with NCTU since 2010. He is currently a Distinguished Professor
	with the Department of Computer Science, NCTU. He is also serving as the
	Convener, Computer Science Program, Ministry of Science and Technology,
	Taiwan.
	
	Dr. Chen received numerous awards, including the Excellent Teaching
	Award from NCTU, the Outstanding I. T. Elite Award, Taiwan, the Mentor
	of Merit Award from NCTU, the K. T. Li Breakthrough Award from the
	Institute of Information and Computing Machinery, the Outstanding
	Professor of Electrical Engineering from the Chinese Institute of
	Electrical Engineering, the Outstanding Research Award from the Ministry
	of Science and Technology, the Outstanding Teaching Award from NTHU, the
	best paper award for Young Scholars from the IEEE Communications Society
	Taipei and Tainan Chapters, and the IEEE Information Theory Society
	Taipei Chapter, and the Telcordia CEO Award. He is a Fellow of the IEEE
	and a Distinguished Member of the ACM. He is a member of the Fellows
	Evaluation Committee, IEEE Computer Society, 2012 and 2016.
\end{IEEEbiography}\vfill

\includepdf[pages=-]{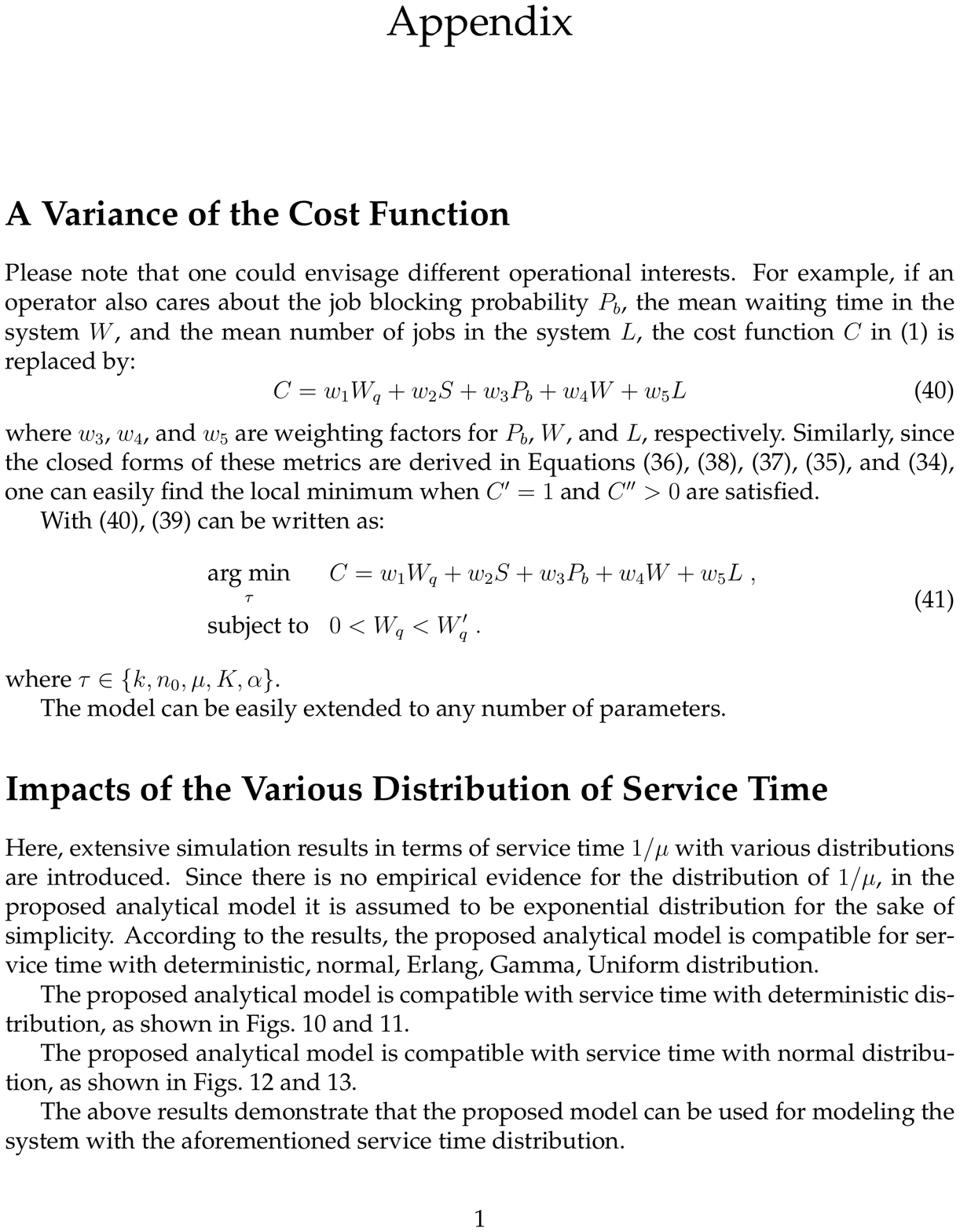}

\end{document}